\documentclass{lmcs}
\pdfoutput=1

\usepackage{lastpage}
\lmcsdoi{17}{3}{1}
\lmcsheading{}{\pageref{LastPage}}{}{}%
{Sep.~30,~2019}{Jul.~20,~2021}{}

\usepackage{microtype}
\usepackage{amssymb}

\usepackage[utf8]{inputenc}
\usepackage{caption}
\usepackage{subcaption}
\usepackage{bm}
\usepackage{macros}
\usepackage{tikz}
\usetikzlibrary{automata,calc,patterns,positioning}
\theoremstyle{plain}

\begin{document}
\title[Affine Extensions of Integer Vector Addition Systems with States]{Affine Extensions of \\
  Integer Vector Addition Systems with States}
\titlecomment{{\lsuper*}A preliminary version of this paper appeared in the proceedings of the $29^\text{th}$ International Conference on Concurrency Theory (CONCUR), 2018~\cite{BlondinHM18}.}

\author[M.~Blondin]{Michael Blondin}
\address{Universit\'{e} de Sherbrooke, Canada}
\email{michael.blondin@usherbrooke.ca}
\thanks{M.\ Blondin was supported by a Discovery Grant from the
  Natural Sciences and Engineering Research Council of Canada (NSERC),
  and by a Quebec--Bavaria project and a start-up grant funded by the
  Fonds de recherche du Québec -- Nature et technologies (FRQNT)}

\author[C.~Haase]{Christoph Haase}
\address{University of Oxford, United Kingdom}
\email{christoph.haase@cs.ox.ac.uk}

\author[F.~Mazowiecki]{Filip Mazowiecki}
\address{Max Planck Institute for Software Systems, Germany}
\email{filipm@mpi-sws.org}
\thanks{F.~Mazowiecki's research has been carried out with financial
  support from
  the French State, managed by the French National Research Agency
  (ANR) in the frame of the ``Investments for the future'' Programme
  IdEx Bordeaux (ANR-10-IDEX-03-02).}

\author[M.~Raskin]{Mikhail Raskin}
\address{Technische Universit\"{a}t M\"{u}nchen, Germany}
\email{raskin@in.tum.de}
\thanks{M.\ Raskin was supported by funding from the European Research Council (ERC) under the European Union’s Horizon 2020 research and innovation programme under grant agreement No 787367 (PaVeS)}

\begin{abstract}
  We study the reachability problem for affine $\Z$-VASS, which are
  integer vector addition systems with states in which transitions
  perform affine transformations on the counters. This problem is
  easily seen to be undecidable in general, and we therefore restrict
  ourselves to affine $\Z$-VASS with the finite-monoid property
  (afmp-$\Z$-VASS). The latter have the property that the monoid
  generated by the matrices appearing in their affine transformations
  is finite. The class of afmp-$\Z$-VASS encompasses classical
  operations of counter machines such as resets, permutations,
  transfers and copies. We show that reachability in an afmp-$\Z$-VASS
  reduces to reachability in a $\Z$-VASS whose control-states grow
  linearly in the size of the matrix monoid. Our construction shows
  that reachability relations of afmp-$\Z$-VASS are semilinear, and in
  particular enables us to show that reachability in $\Z$-VASS with
  transfers and $\Z$-VASS with copies is PSPACE-complete. We then
  focus on the reachability problem for affine $\Z$-VASS with
  monogenic monoids: (possibly infinite) matrix monoids generated by a
  single matrix. We show that, in a particular case, the reachability
  problem is decidable for this class, disproving a conjecture about
  affine $\Z$-VASS with infinite matrix monoids we raised in a
  preliminary version of this paper. We complement this result by
  presenting an affine $\Z$-VASS with monogenic matrix monoid
  and undecidable reachability relation.
\end{abstract}

\maketitle

\section{Introduction}

\emph{Vector addition systems with states (VASS)} are a fundamental
model of computation comprising a finite-state controller with a
finite number of counters ranging over the natural numbers. When a
transition is taken, a counter can be incremented or decremented
provided that the resulting counter value is greater than or equal to
zero. Since the counters of a VASS are unbounded, a VASS gives rise to
an infinite transition system. One of the biggest advantages of VASS
is that most of the standard decision problems such as configuration
reachability and coverability are
decidable~\cite{KM69,May84,Kos82,Ler12}. Those properties make VASS
and their extensions a prime choice for reasoning about and modelling
concurrent, distributed and parametrised systems, see \eg\ the recent
surveys by Abdulla and Delzanno~\cite{AD16,Del16}.

In order to increase their modelling power, numerous extensions of
plain VASS have been proposed and studied in the literature over the
last 25 years. Due to the infinite-state nature of VASS, even minor
extensions often cross the undecidability frontier. For example, while
in the extension of VASS with hierarchical zero-tests on counters both
reachability and coverability remain decidable~\cite{Rei08,Bon13}, all
important decision problems for VASS with two counters which can
arbitrarily be tested for zero are undecidable~\cite{M67}. Another
example is the extension of VASS with reset and transfer operations. In a
\emph{reset VASS}, transitions may set a counter to zero, whereas
\emph{transfer VASS} generalize reset VASS and allow transitions to
move the contents of a counter onto another. While it was initially
widely believed that any extension of VASS either renders both
reachability and coverability undecidable, reset and transfer VASS
have provided an example of an extension which leads to an undecidable
reachability~\cite{AK76} yet decidable coverability
problem~\cite{DFS98}. Nevertheless, the computational costs for those
extensions are high: while coverability is EXPSPACE-complete for
VASS~\cite{Lip76,Rac78}, it becomes Ackermann-complete in the presence of
resets and transfers~\cite{Sch10,FFSS11}. For practical purposes, the
extension of VASS with transfers is particularly useful since transfer
VASS allow for reasoning about broadcast protocols and multithreaded
non-recursive C programs~\cite{EN98,KKW14}. It was already observed
in~\cite{EN98} that transfer VASS can be viewed as an instance of
so-called \emph{affine VASS}. An affine VASS is a generalization of VASS with
transitions labelled by pairs $(\mat A, \vec b)$, where $\mat A$ is a
$d\times d$ matrix over the integers and $\vec b \in \Z^d$ is an
integer vector. A transition switches the control-state while updating
the configuration of the counters $\vec v\in \N^d$ to $\mat A\cdot
\vec v + \vec b$, provided that $\mat A\cdot \vec v + \vec b\ge \vec
0$; otherwise, the transition is blocked. Transfer VASS can be viewed
as affine VASS in which the columns of all matrices are
$d$-dimensional unit vectors~\cite{EN98}.

Due to the symbolic state-explosion problem and Ackermann-hardness of
coverability, standard decision procedures for transfer VASS such as
the backward algorithm~\cite{ACJT96} do not \emph{per se} scale to
real-world instances. In recent years, numerous authors have proposed
the use of over-approximations in order to attenuate the symbolic
state-explosion problem for VASS and some of their extensions (see,
\eg,~\cite{ELMMN14,ALW16,BH17}). Most commonly, the basic idea is to
relax the integrality or non-negativity condition on the counters and
to allow them to take values from the non-negative rational numbers or
the integers. The latter class is usually referred to as $\Z$-VASS,
see \eg~\cite{HH14}. It is easily seen that if a configuration
is not reachable under the relaxed semantics, then the configuration
is also not reachable under the standard semantics. Hence, those
state-space over-approximations can, for instance, be used to prune
search spaces and empirically drastically speedup classical
algorithms for VASS such as the backward-algorithm. In this paper, we
investigate reachability in \emph{integer over-approximations} of
affine VASS, \ie, affine VASS in which a configuration of the counters
is a point in $\Z^d$, and in which all transitions are
non-blocking. Subsequently, we refer to such VASS as \emph{affine
  $\Z$-VASS}.

\parag{Main contributions.} We focus on affine $\Z$-VASS with the
\emph{finite-monoid property} (afmp-$\Z$-VASS), \ie\ where the matrix
monoid generated by all matrices occurring along transitions in the
affine $\Z$-VASS is finite. By a reduction to reachability in
$\Z$-VASS, we obtain decidability of reachability for the whole class
of afmp-$\Z$-VASS and semilinearity of their reachability relations.

In more detail, we show that reachability in an afmp-$\Z$-VASS can be
reduced to reachability in a $\Z$-VASS whose size is polynomial in the
size of the original afmp-$\Z$-VASS and in the norm of the finite
monoid $\monoid$ generated by the matrices occurring along
transitions, denoted by $\norm{\monoid}$. For a vast number of classes
of affine transformations considered in the literature,
$\norm{\monoid}$ is bounded exponentially in the dimension of the
matrices. This enables us to deduce a general PSPACE upper bound for
extensions of $\Z$-VASS such as transfer $\Z$-VASS and copy
$\Z$-VASS. By a slightly more elaborated analysis of this
construction, we are also able to provide a short proof of the already
known NP upper bound for reset $\Z$-VASS~\cite{HH14}. We also show
that a PSPACE lower bound of the reachability problem already holds
for the extension of $\Z$-VASS that only use permutation matrices in
their transition updates. This in turn gives PSPACE-completeness of
interesting classes such as transfer $\Z$-VASS and copy $\Z$-VASS.

Finally, we show that an affine $\Z$-VASS that has both transfers and
copies may not have the finite-monoid property, and that the
reachability problem for this class becomes undecidable. We complement
this result by investigating the case of monogenic classes,
\ie\ classes of monoids with a single generator. We show that although
reachability can still be undecidable for an affine $\Z$-VASS
with a monogenic matrix monoid, there exists a monogenic class
without the finite-monoid property for which reachability is
decidable.

All complexity results obtained in this paper are summarized in
Figure~\ref{fig:classification}, except for the undecidability of
general monogenic classes as it is a family of classes rather than one class.

\parag{Related work.} Our work is primarily related to the work of
Finkel and Leroux~\cite{FL02}, Iosif and Sangnier~\cite{IS16}, Haase
and Halfon~\cite{HH14}, and Cadilhac, Finkel and
McKenzie~\cite{CadilhacFM12,CadilhacFM13}. In~\cite{FL02}, Finkel and
Leroux consider a model more general than affine $\Z$-VASS in which
transitions are additionally equipped with guards which are Presburger
formulas defining admissible sets of vectors in which a transition
does not block. Given a sequence of transitions $\sigma$, Finkel and
Leroux show that the reachability set obtained from repeatedly
iterating $\sigma$, \ie, the \emph{acceleration} of $\sigma$, is
definable in Presburger arithmetic. Note that the model of Finkel and
Leroux does not allow for control-states and the usual tricks of
encoding each control-state by a counter or all control-states into
three counters~\cite{HP79} do not work over $\Z$ since transitions are
non-blocking. Iosif and Sangnier~\cite{IS16} investigated the
complexity of model checking problems for a variant of the model of
Finkel and Leroux with guards defined by convex polyhedra and with
control-states over a flat structure. Haase and Halfon~\cite{HH14}
studied the complexity of the reachability, coverability and inclusion
problems for $\Z$-VASS and reset $\Z$-VASS, two submodels of the
affine $\Z$-VASS that we study in this paper.  In
~\cite{CadilhacFM12,CadilhacFM13}, Cadilhac, Finkel and McKenzie
consider an extension of Parikh automata to affine Parikh automata
with the finite-monoid restriction like in our paper. These are
automata recognizing boolean languages, but the finite-monoid
restriction was exploited in a similar way to obtain some decidability
results in that context.  We finally remark that our models capture
variants of cost register automata that have only one $+$
operation~\cite{AR13,AFR14}.
  
\parag{Structure of the paper.} We introduce general notations and
affine $\Z$-VASS in Section~\ref{sec:preliminaries}. In
Section~\ref{sec:finite:monoid}, we give the reduction from
afmp-$\Z$-VASS to $\Z$-VASS. Subsequently, in
Section~\ref{sec:semilinear} we show that afmp-$\Z$-VASS have
semilinear reachability relations and discuss semilinearity of affine
$\Z$-VASS in general. In Section~\ref{sec:upper:bounds}, we show
PSPACE and NP upper bounds of the reachability problem for some
classes of afmp-$\Z$-VASS; and in Section~\ref{sec:lower:bounds} we
show PSPACE-hardness and undecidability results for some classes of
affine $\Z$-VASS. In Section~\ref{sec:inf:monoid}, we show that
reachability is undecidable for monogenic affine $\Z$-VASS and remains
decidable for a specific class of infinite monoids. Some concluding
remarks will be made in Section~\ref{sec:conclusion}.

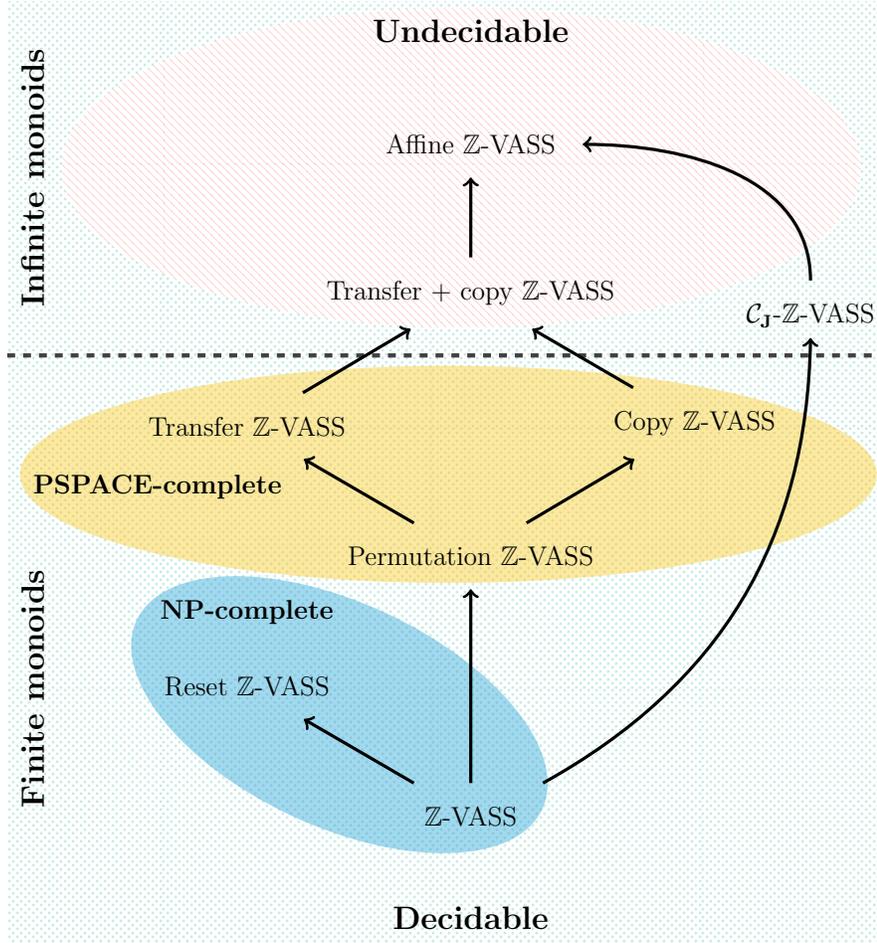
\begin{figure}[!h]
  \definecolor{colundecidable}{RGB}{255, 0, 0}
  \definecolor{colpspace}{RGB}{253, 199, 17}
  \definecolor{coldecidable}{RGB}{30, 176, 149}
  \definecolor{colnp}{RGB}{25, 158, 213}
  \centering
  \begin{tikzpicture}[auto, very thick, scale=0.85, transform shape]
    \tikzset{every node/.style={inner sep=10pt, align=center, minimum
        width=150pt, font=\large}}
    
    \fill[pattern=crosshatch dots, pattern color=coldecidable,
      opacity=0.5] (-6,-2) -- (-6,12.75) -- (7.75,12.75) -- (7.75,-2) --
    cycle;

    \fill[white, rotate around={0:(0,0)}] (1.1,10.15) ellipse
    (6.25cm and 2.5cm);
    \fill[pattern=north west lines, pattern color=colundecidable,
      opacity=0.4, rotate around={0:(0,0)}] (1.1,10.15) ellipse
    (6.25cm and 2.5cm);

    \fill[colnp, opacity=0.4, rotate around={-25:(0,0)}] (-1.4,1.1) ellipse
    (3.5cm and 1.75cm);

    \fill[colpspace, opacity=0.4, rotate around={0:(0,0)}] (0.9,5.35)
    ellipse (6.7cm and 1.7cm);

    \node[] (standard) [xshift=1.25cm] {$\Z$-VASS};
    
    \node[] (reset) [above=1cm of standard, xshift=-3.5cm] {Reset
      $\Z$-VASS};

    \node[] (resetpermutation) [above=1cm of reset, xshift=3.5cm]
         {Permutation $\Z$-VASS};

    \node[] (transfer) [above=1cm of resetpermutation, xshift=-3.5cm]
         {Transfer $\Z$-VASS};
    
    \node[] (copy) [above=1cm of resetpermutation, xshift=3.5cm] {Copy
      $\Z$-VASS};

    \node[] (transfercopy) [above=1cm of transfer, xshift=3.5cm]
         {Transfer + copy $\Z$-VASS};

    \node[] (affine) [above=1.25cm of transfercopy] {Affine $\Z$-VASS};

    \node[] (classone) [right=0cm of transfercopy, yshift=-10pt] {$\classone$-$\Z$-VASS};
    
    \node[rotate around={0:(0,0)}] at (-2.25,3.2) {\large
      \textbf{NP-complete}};

    \node[rotate around={0:(0,0)}] at (-3.65,5.15) {\large
      \textbf{PSPACE-complete}};

    \draw[dashed, black!50!gray, ultra thick] ($(transfercopy) +
    (-7.25,-0.975)$) -- ($(transfercopy) + (6.5,-0.975)$);

    \node[rotate around={90:(0,0)}] at (-5.6,2)
         {\Large \textbf{Finite monoids}};

    \node[rotate around={90:(0,0)}] at (-5.6,10)
         {\Large \textbf{Infinite monoids}};

    \node[] [below=15pt of standard]  {\Large \textbf{Decidable}};

    \node[] [above=20pt of affine]  {\Large \textbf{Undecidable}};

    \path[->]
    (standard) edge [] node {} (reset)
    (standard) edge (resetpermutation)
    (resetpermutation) edge [] node {} (transfer)
    (resetpermutation) edge [] node {} (copy)
    (transfer) edge [] node {} (transfercopy)
    (copy) edge [] node {} (transfercopy)
    (transfercopy) edge [] node {} (affine)
    ;

    \path[->]
    (standard) edge [bend right] node {} ($(classone)+(0,-0.35)$)
    (classone) edge [out=90, in=0] node {} ($(affine)+(1.75,0)$)
    ;
  \end{tikzpicture}
  \caption{Classification of the complexity of reachability in affine
    $\Z$-VASS in terms of classes of matrices. The rectangular regions
    below and above the horizontal dashed line correspond to classes
    of matrices with finite and infinite monoids respectively. The
    rectangular green dotted region and the elliptical red striped
    region correspond to the classes where reachability is decidable
    and undecidable, respectively. The elliptical blue region and the
    orange elliptical region correspond to the classes where
    reachability is NP-complete and PSPACE-complete respectively. The
    term ``$\classone$-$\Z$-VASS'' refers to the specific monogenic
    class of infinite monoids that will be defined in
    Section~\ref{sec:inf:monoids}.}\label{fig:classification}
\end{figure}

\section{Preliminaries}
\label{sec:preliminaries}

\parag{General notation.} For $n \in \N$, we write $[n]$ for the set
$\{1, 2, \ldots, n\}$. For every $\vec{x} = (x_1, x_2, \ldots, x_d)
\in \Z^d$ and every $i \in [d]$, we define $\vec{x}(i) \defeq x_i$. We
denote the identity matrix and the zero-vector by $\mat{I}$ and
$\vec{0}$ in every dimension, as there will be no ambiguity. For 
$\vec{x} \in \Z^d$ and $\mat{A} \in \Z^{d \times d}$, we define the
\emph{max-norm} of $\vec{x}$ and $\mat{A}$ as $\norm{\vec{x}} \defeq
\max\{|\vec{x}(i)| : i \in [d]\}$ and $\norm{\mat{A}} \defeq
\max\{\norm{\mat{A}_i} : i \in [d]\}$ where $\mat{A}_i$ denotes the
$i^\text{th}$ column of $\mat{A}$. We naturally extend this notation
to finite sets, \ie\ $\norm{G} \defeq \max\{\norm{\mat{A}} : \mat{A}
\in G\}$ for every $G \subseteq_\text{fin} \Z^{d \times d}$. We assume
that numbers are represented in binary, hence the entries of vectors
and matrices can be exponential in the size of their encodings.

\parag{Affine Integer VASS.} An \emph{affine integer vector addition
  system with states (affine $\Z$-VASS)} is a tuple $\V = (d, Q, T)$
where $d \in \N$, $Q$ is a finite set and $T \subseteq Q \times \Z^{d
  \times d} \times \Z^d \times Q$ is finite. Let us fix such a
$\V$. We call $d$ the \emph{dimension} of $\V$ and the elements of $Q$
and $T$ respectively \emph{control-states} and \emph{transitions}. For
every transition $t = (p, \mat{A}, \vec{b}, q)$, we define $\src{t} \defeq
p$, $\tgt{t} \defeq q$, $\tmat[t] \defeq \mat{A}$ and $\tvec[t] \defeq
\vec{b}$, and let $f_t \colon \Z^d \to \Z^d$ be the affine
transformation defined by $f_t(\vec{x}) = \mat{A} \cdot \vec{x} +
\vec{b}$. The \emph{size} of $\V$, denoted $|\V|$, is the number of
bits used to represent $d$, $Q$ and $T$ with coefficients written in
binary. For our purposes, we formally define it in a crude way as
$|\V| \defeq d + |Q| + (d^2 + d) \cdot |T| \cdot \max(1, \lceil
\log(\norm{T} + 1) \rceil)$ where $$\norm{T} \defeq
\max(\max\{\norm{\tvec[t]} : t \in T \}, \max\{\norm{\tmat[t]} : t \in
T \}).$$

A \emph{configuration} of $\V$ is a pair $(q, \vec{v}) \in Q \times
\Z^d$ which we write as $q(\vec{v})$. For every $t \in T$ and
$p(\vec{u}), q(\vec{v}) \in Q \times \Z^d$, we write $p(\vec{u})
\step{t} q(\vec{v})$ whenever $p = \src{t}$, $q = \tgt{t}$ and $\vec{v} =
f_t(\vec{u})$. We naturally extend $\step{}$ to sequences of
transitions as follows. For every $w = w_1 \cdots w_k \in T^k$ and $p(\vec{u}),
q(\vec{v}) \in Q \times \Z^d$, we write $p(\vec{u}) \step{w}
q(\vec{v})$ if either $k = 0$ (denoted $w = \varepsilon$) and $p(\vec{u}) = q(\vec{v})$, or
$k > 0$ and there exist $p_0(\vec{u}_0), p_1(\vec{u}_1), \ldots,
p_k(\vec{u}_k) \in Q \times \Z^d$ such that
$$
p(\vec{u}) = p_0(\vec{u}_0) \step{w_1} p_1(\vec{u}_1) \step{w_2} {}
\cdots {} \step{w_k} p_k(\vec{u}_k) = q(\vec{v}).
$$ We write $p(\vec{u}) \step{*} q(\vec{v})$ if there exists some $w \in
T^*$ such that $p(\vec{u}) \step{w} q(\vec{v})$. The relation
$\step{*}$ is called the \emph{reachability relation} of $\V$. If
$p(\vec{u}) \step{w} q(\vec{v})$, then we say that $w$ is a \emph{run
  from $p(\vec{u})$ to $q(\vec{v})$}, or simply a \emph{run} if the
source and target configurations are irrelevant. We also say that $w$
is a \emph{path} from $p$ to $q$, and if $p = q$ then we say that $w$
is a \emph{cycle}.

Let $\tmat[\V] \defeq \{\tmat[t] : t \in T\}$ and $\tvec[\V] \defeq
\{\tvec[t] : t \in T\}$. If $\V$ is clear from the context, we
sometimes simply write $\tmat$ and $\tvec$. The \emph{monoid of $\V$},
denoted $\monoid[\V]$ or sometimes simply $\monoid$, is the monoid
generated by $\tmat[\V]$, \ie\ it is the smallest set that contains
$\tmat[\V]$, is closed under matrix multiplication, and contains the
identity matrix. We say that a matrix $\mat{A} \in \N^{d \times d}$ is
respectively a (i)~\emph{reset}, (ii)~\emph{permutation},
(iii)~\emph{transfer}, (iv)~\emph{copyless}, or (v)~\emph{copy} matrix
if $\mat{A} \in \{0, 1\}^{d \times d}$ and
\begin{enumerate}[(i)]
\item $\mat{A}$ does not contain any $1$ outside of its diagonal;
  
\item $\mat{A}$ has exactly one $1$ in each row and each column;

\item $\mat{A}$ has exactly one $1$ in each column;

\item $\mat{A}$ has at most one $1$ in each column;

\item $\mat{A}$ has exactly one $1$ in each row.
\end{enumerate}

Analogously, we say that $\V$ is respectively a \emph{reset},
\emph{permutation}, \emph{transfer}, \emph{copyless}, or \emph{copy}
$\Z$-VASS if all matrices of $\tmat[\V]$ are reset, permutation,
transfer, copyless, or copy matrices. The monoids of such affine
$\Z$-VASS are finite and respectively of size at most $2^d$, $d!$,
$d^d$, $(d+1)^d$ and $d^d$. Copyless $\Z$-VASS correspond to a model
of copyless cost-register automata studied in~\cite{AFR14} (see the
remark below). If $\tmat[\V]$ only contains the identity matrix, then
$\V$ is simply called a $\Z$-VASS.

A \emph{class of matrices} $\class$ is a union $\bigcup_{d \geq 1}
\class_d$ where $\class_d$ is a finitely generated, but possibly
infinite, submonoid of $\N^{d \times d}$ for every $d \geq 1$. We say
that $\V$ belongs to a class $\class$ of $\Z$-VASS if $\monoid_\V
\subseteq \class$. If each $\class_d$ is finite, then we say that this
class of affine $\Z$-VASS has the \emph{finite-monoid property}
(afmp-$\Z$-VASS). For two classes $\class$ and $\class'$ we write
$\class + \class'$ to denote the smallest set $\mathcal{D} =
\bigcup_{d \geq 1} \mathcal{D}_d$ such that $\mathcal{D}_d$ is a
monoid that contains both $\class_d$ and $\class_d'$ for every $d \geq
1$. Note that this operation does not preserve finiteness. For
example if $\class$ and $\class'$ are the classes of transfer and copy matrices, respectively,
then $\class + \class'$ is infinite (see
Figure~\ref{fig:example} and Section~\ref{sec:lower:bounds}). We say
that a class $\class = \bigcup_{d \geq 1} \class_d$ is
\emph{nonnegative} if $\class_d \subseteq \N^{d \times d}$ for every
$d \geq 1$. We say that an affine $\Z$-VASS $\V$ is \emph{nonnegative} if
$\monoid_\V$ belongs to some nonnegative class of matrices. Note that the
classes of reset, permutation, transfer, copyless and copy matrices
are all nonnegative, respectively.

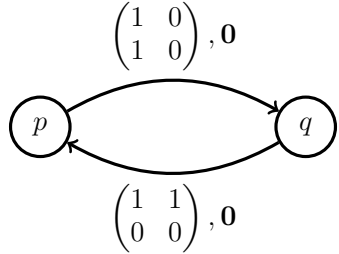
\begin{figure}[h]
  \centering
  \begin{tikzpicture}[->, node distance=3cm, auto, very thick, scale=0.9, transform shape, font=\large]
    \tikzset{every state/.style={inner sep=1pt, minimum size=25pt}}
 
    \node[state] (p) {$p$};
    \node[state] (q) [right= of p] {$q$};
    
    \path[->]
    (p) edge[bend left] node {
      $\begin{pmatrix}1 & 0 \\ 1 & 0\end{pmatrix}, \vec{0}$
    } (q)

    (q) edge[bend left] node {
      $\begin{pmatrix}1 & 1 \\ 0 & 0\end{pmatrix}, \vec{0}$
    } (p)
    ;
  \end{tikzpicture}
  \caption{Example of a transfer + copy $\Z$-VASS $\V$ which does not
    have the finite-monoid property.}\label{fig:example}
\end{figure}

We discuss the $\Z$-VASS $\V$ in Figure~\ref{fig:example} to give some
intuition behind the names transfer and copy $\Z$-VASS. The transition
from $p$ to $q$ is a copy transition and the transition from $q$ to
$p$ is a transfer transition. Notice that for every vector $(x,y) \in
\Z^2$, we have $p(x, y) \step{} q(x, x)$, \ie\ the value of the first
counter is copied to the second counter. Similarly, for the other
transition we have $q(x, y) \step{} p(x + y,0)$, that is the value of
the second counter is transferred to the first counter (resetting its
own value to $0$). Let $\mat{A}$ and $\mat{B}$ be the two matrices
used in $\V$. Note that $(\mat{A} \cdot \mat{B})^n$ is the matrix with
all entries equal to $2^{n-1}$, and hence $\monoid_\V$ is infinite.

\begin{rem}\label{remark:CRA}
  The variants of affine $\Z$-VASS that we consider are related to
  cost register automata (CRA) with only the $+$
  operation~\cite{AR13,AFR14} and without an output function. These
  are deterministic models with states and registers that upon reading
  an input, update their registers in the form $x \leftarrow y + c$,
  where $x, y$ are registers and $c$ is an integer. An affine
  $\Z$-VASS does not read any input, but is nondeterministic. Thus,
  one can identify an affine $\Z$-VASS with a CRA that reads sequences
  of transitions as words. In particular, the restrictions imposed on
  the studied CRAs correspond to copy $\Z$-VASS~\cite{AR13} and
  copyless $\Z$-VASS~\cite{AFR14}.
\end{rem}

\parag{Decision problems.} We consider the \emph{reachability} and the
\emph{coverability} problems parameterized by classes of matrices
$\class$:\bigskip

\noindent\;\;\underline{$\reach[\class]$ (reachability problem)} \\[2pt]
\begin{tabular}{ll}
  \textsc{Given}: & an affine $\Z$-VASS $\V = (d, Q, T)$ and
  configurations $p(\vec{u}), q(\vec{v})$ s.t.\ $\monoid[\V]
  \subseteq \class$. \\[2pt]
  
  \textsc{Decide}: & whether $p(\vec{u}) \step{*} q(\vec{v})$.
\end{tabular} \\[5pt]

\noindent\;\;\underline{$\cover[\class]$ (coverability problem)} \\[2pt]
\begin{tabular}{ll}
  \textsc{Given}: & an affine $\Z$-VASS $\V = (d, Q, T)$ and
  configurations $p(\vec{u}), q(\vec{v})$ s.t.\ $\monoid[\V]
  \subseteq \class$. \\[2pt]
  
  \textsc{Decide}: & whether there exists $\vec{v}' \in \Z^d$ such
  that $p(\vec{u}) \step{*} q(\vec{v}')$ and $\vec{v}' \geq \vec{v}$.
\end{tabular}\bigskip

For standard VASS (where configurations cannot hold negative values),
the coverability problem is much simpler than the reachability
problem. However, for affine $\Z$-VASS, these two problems coincide as
observed in~\cite[Lemma~2]{HH14}: the two problems are inter-reducible
in logarithmic space at the cost of doubling the number of
counters. Therefore we will only study the reachability problem in
this paper.

\section{From affine $\Z$-VASS with the finite-monoid property to $\Z$-VASS} \label{sec:finite:monoid}

The main result of this section is that every affine $\Z$-VASS $\V$
with the finite monoid property can be simulated by a $\Z$-VASS with twice the number of 
counters whose size is polynomial in $\norm{\monoid[\V]}$ and
$|\V|$. More formally, we show the following:

\begin{thm}\label{thm:monoid:to:zvass}
  For every afmp-$\Z$-VASS $\V = (d, Q, T)$ and $p,q \in Q$ there exist a $\Z$-VASS
  $\V' = (d', Q', T')$ and $p', q' \in Q'$ such that
  \begin{itemize}
  \item $d' = 2 \cdot d$,

  \item $|Q'| \leq 3 \cdot |\monoid[\V]| \cdot |Q|$,

  \item $|T'| \leq 4d \cdot |\monoid[\V]| \cdot (|Q| + |T|)$,

  \item $\norm{T'} \leq \norm{\monoid[\V]} \cdot \norm{T}$,

  \item $p(\vec{u}) \step{*} q(\vec{v})$ in $\V$ if and only if
    $p'(\vec{u}, \vec{0}) \step{*} q'(\vec{0}, \vec{v})$ in $\V'$.
  \end{itemize}
  Moreover, $\V'$, $p'$ and $q'$ are effectively computable from $\V$.
\end{thm}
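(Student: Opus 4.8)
The plan is to build a $\Z$-VASS $\V'$ that carries, inside its finite control, the matrix $\V$ has accumulated along the simulated run. The obstacle is that a plain $\Z$-VASS updates its counters only by constant vectors (all its matrices are $\mat I$), whereas one step of $\V$ performs $\vec x \mapsto \mat A\vec x + \vec b$; so $\V'$ must realise arbitrary multiplications by monoid matrices without ever literally multiplying. The finite-monoid property is what makes this feasible: there are only $|\monoid[\V]|$ possible accumulated matrices, so the control-states of $\V'$ can be (essentially) pairs $\langle r, \mat P\rangle$ with $r \in Q$ and $\mat P \in \monoid[\V]$, from which the bound $|Q'| \le 3\cdot|\monoid[\V]|\cdot|Q|$ will follow after the auxiliary states introduced below are counted.

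The device for handling the affine offsets is to read the run \emph{backwards}. If $\V$ executes transitions with matrices $\mat A_1,\dots,\mat A_k$ and offsets $\vec b_1,\dots,\vec b_k$, then $p(\vec u)\step{*}q(\vec v)$ with $\vec v = \mat M\vec u + \vec c$, where $\mat M = \mat A_k\cdots\mat A_1$ and $\vec c = \sum_{i=1}^k(\mat A_k\cdots\mat A_{i+1})\vec b_i$. Scanning the transitions from the last to the first and keeping the suffix product $\mat P = \mat A_k\cdots\mat A_{i+1}$ in the control, the contribution $\mat P\vec b_i$ of the next offset is a \emph{constant} vector, hence addable in one $\Z$-VASS step. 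Accordingly I use the first $d$ counters to hold the input $\vec u$ untouched and the last $d$ counters to accumulate $\vec c$: a backward step for a transition $t=(\src{t},\mat A,\vec b,\tgt{t})$ of $\V$ leads from $\langle\tgt{t},\mat P\rangle$ to $\langle\src{t},\mat P\mat A\rangle$ while adding the offset contribution to the output block. Started in $p' = \langle q,\mat I\rangle$ from $(\vec u,\vec 0)$, a run of $\V'$ reaching $\langle p,\mat M\rangle$ is in bijection with a run $p(\vec u)\step{*}q(\cdot)$ of $\V$ having matrix $\mat M$, and the output block then holds precisely $\vec c$.

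There remains the single multiplication $\mat M\vec u$ that genuinely cannot be avoided, and it is deferred to a final phase. From $\langle p,\mat M\rangle$ the automaton enters an auxiliary state $\mathrm{fin}_{\mat M}$ carrying, for every coordinate $j\in[d]$, two self-loops: one decrements input counter $j$ by $1$ and adds the $j$-th column of $\mat M$ to the output block, the other performs the opposite update. Iterating these loops pumps $\mat M\vec u$ into the output counters while draining the input counters. A $\Z$-VASS has no zero-test, but here the prescribed target configuration $q'(\vec 0,\vec v)$ \emph{demands} that the input block be empty, which forces exactly the right net number of iterations and is robust to overshooting (surplus positive and negative transfers cancel in both blocks, since the net change of input counter $j$ fixes the multiple of the $j$-th column added). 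Emptying the input block thus leaves the output block equal to $\mat M\vec u+\vec c = \vec v$, and one moves to $q'$. Every transition of $\V'$ uses the identity matrix, so $\V'$ is a genuine $\Z$-VASS; since the update vectors are either offset contributions controlled by the tracked matrix or single columns of monoid matrices (of norm at most $\norm{\monoid[\V]}$), a routine accounting of the states and transitions created yields the displayed bounds on $|Q'|$, $|T'|$ and $\norm{T'}$, all effectively computable from $\V$. The crux of the whole argument is exactly this matrix-free realisation of multiplication: it succeeds because the finitely many reachable matrices are stored in the control, the backward scan turns every offset into a constant, and the one unavoidable product $\mat M\vec u$ is carried out by increment/decrement loops whose correctness is enforced, in place of a zero-test, by the zeros required of the target configuration.
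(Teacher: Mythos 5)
Your proposal is correct and matches the paper's own proof essentially step for step: the backward scan with the suffix product stored in the control is exactly the paper's auxiliary $\Z$-VASS $\V'$ over $Q\times\monoid[\V]$ (Proposition~\ref{prop:theorem_for0}), and your $\mathrm{fin}$-phase with the paired increment/decrement column-transfer loops is precisely the paper's $T_\text{mult}$ on the barred states $\overline{Q'}$, with correctness likewise enforced by the zero input block demanded at the target rather than by a zero-test. Even the robustness-to-overshooting observation (opposite loops cancel, so only the net iteration count per coordinate matters) is the same argument the paper uses implicitly in its $\Leftarrow$ direction, so there is nothing genuinely different to compare.
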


\begin{cor}
  The reachability problem for afmp-$\Z$-VASS is decidable.
\end{cor}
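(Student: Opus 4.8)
The final statement follows immediately from Theorem~\ref{thm:monoid:to:zvass} together with the decidability of reachability for plain $\Z$-VASS, so the plan is to reduce to the latter and then dispatch it by a Parikh-image argument.

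First, given an afmp-$\Z$-VASS $\V$ and configurations $p(\vec u), q(\vec v)$, I would invoke Theorem~\ref{thm:monoid:to:zvass} to compute, in finite time, a $\Z$-VASS $\V' = (d', Q', T')$ and states $p', q' \in Q'$ such that $p(\vec u) \step{*} q(\vec v)$ in $\V$ iff $p'(\vec u, \vec 0) \step{*} q'(\vec 0, \vec v)$ in $\V'$. Since $\V'$, $p'$ and $q'$ are effectively computable, it suffices to decide reachability in the $\Z$-VASS $\V'$.

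Second, I would solve $\Z$-VASS reachability directly. Here every transition $t \in T'$ acts by the identity matrix, so taking $t$ only adds the constant $\tvec[t]$ to the counters; consequently a configuration $q'(\vec w)$ is reachable from $p'(\vec u')$ precisely when there is a path $w$ from $p'$ to $q'$ in the underlying directed multigraph whose transitions sum to the required displacement, i.e. $\vec w - \vec u' = \sum_{t \in T'} \vec{n}(t)\cdot\tvec[t]$ where $\vec n \in \N^{T'}$ is the Parikh image of $w$. The set $L$ of all paths from $p'$ to $q'$ is a regular language over the alphabet $T'$, so by Parikh's theorem its Parikh image $\Pi(L) \subseteq \N^{T'}$ is an effectively computable semilinear set. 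The displacement map $\vec n \mapsto \sum_{t \in T'} \vec n(t)\,\tvec[t]$ is linear, hence the set $S \defeq \{\sum_{t \in T'} \vec n(t)\,\tvec[t] : \vec n \in \Pi(L)\} \subseteq \Z^{d'}$ of realizable displacements is again effectively semilinear. It then remains to test membership $\vec w - \vec u' \in S$, which is decidable for effectively given semilinear sets; applying this with $\vec u' = (\vec u, \vec 0)$ and $\vec w = (\vec 0, \vec v)$ settles reachability in $\V'$ and therefore in $\V$.

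The only delicate point is the passage from ``numbers of occurrences of transitions'' to an actual run: a tuple $\vec n$ realising the displacement equation need not, on its own, correspond to a connected walk from $p'$ to $q'$, since a flow may split into a genuine $p'$--$q'$ path plus cycles disconnected from it. I expect this to be the main obstacle, and the reason I phrase the argument through the Parikh image of the path \emph{language} $L$ rather than through a bare flow-conservation system is exactly that Parikh's theorem builds the connectivity constraint in automatically: every vector of $\Pi(L)$ is witnessed by a bona fide path. Everything else --- the effective construction of a semilinear representation of $\Pi(L)$ and the decidability of semilinear membership --- is standard.
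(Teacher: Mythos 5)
Your proposal is correct and takes essentially the same route as the paper: both reduce to $\Z$-VASS reachability via Theorem~\ref{thm:monoid:to:zvass}, whose statement already guarantees the effective computability of $\V'$ that you rely on (the paper justifies this step by a computable bound on $\norm{\monoid[\V]}$ from~\cite{MS77}). The only difference is that where the paper simply cites the known NP membership of $\Z$-VASS reachability~\cite{HH14}, you re-derive its decidability from scratch via Parikh's theorem --- which is in fact the argument underlying that citation, and your use of the Parikh image of the path \emph{language} to enforce connectivity is sound.
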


\begin{proof}
  By Theorem~\ref{thm:monoid:to:zvass}, it suffices to construct, for
  a given afmp-$\Z$-VASS $\V$, the $\Z$-VASS $\V'$ and to test for
  reachability in $\V'$. It is known that reachability for $\Z$-VASS
  is in NP~\cite{HH14}. To effectively compute $\V'$ it suffices to
  provide a bound for $\norm{\monoid[\V]}$. It is known that if
  $|\monoid_\V|$ is finite then it is bounded by a computable function
  (see~\cite{MS77}), and hence $\norm{\monoid[\V]}$ is also
  computable.
\end{proof}

For the remainder of this section, let us fix some affine $\Z$-VASS
$\V$ such that $\monoid[\V]$ is finite. We proceed as follows to prove
Theorem~\ref{thm:monoid:to:zvass}. First, we introduce some notations
and intermediary lemmas characterizing reachability in affine
$\Z$-VASS. Next, we give a construction that essentially proves the
special case of Theorem~\ref{thm:monoid:to:zvass} where the initial
configuration is of the form $p(\vec{0})$. Finally, we prove
Theorem~\ref{thm:monoid:to:zvass} by extending this construction to
the general case.

It is worth noting that proving the general case is not necessary if
one is only interested in deciding reachability. Indeed, an initial
configuration $p(\vec{v})$ can be turned into one of the form
$p'(\vec{0})$ by adding a transition that adds $\vec{v}$. The reason
for proving the general case is that it establishes a stronger
relation that allows us to prove semilinearity of afmp-$\Z$-VASS
reachability relations in Section~\ref{sec:semilinear}.

\subsection{A characterization of reachability}
  
For every $w \in T^*$, $t \in T$ and $\vec{u} \in \Z^d$, let
\begin{align*}
  \tmat[\varepsilon] &\defeq \mat{I},
  & \varepsilon(\vec{u}) &\defeq \vec{u},
  \\ \tmat[w t] &\defeq \tmat[t] \cdot \tmat[w],
  & w t(\vec{u}) &\defeq \tmat[t] \cdot w(\vec{u}) + \tvec[t].
\end{align*}
Intuitively, for any sequence $w \in T^*$, $w(\vec{u})$ is the effect
of $w$ on $\vec{u}$, regardless of whether $w$ is an actual path of
the underlying graph. A simple induction yields the following
characterization:

\begin{lem} \label{lem:reach:charac}
  For every $w \in T^*$ and $p(\vec{u}), q(\vec{v}) \in Q \times
  \Z^d$, it is the case that $p(\vec{u}) \step{w} q(\vec{v})$ if and
  only if
  \begin{enumerate}[(a)]
  \item $w$ is a path from $p$ to $q$ in the underlying graph of
    $\V$, and \label{itm:path}
  \item $\vec{v} = w(\vec{u})$.
  \end{enumerate}
\end{lem}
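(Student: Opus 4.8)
The plan is to prove the equivalence by induction on the length $k$ of $w$, since the definitions of both $w(\vec{u})$ and the step relation $\step{w}$ are given recursively on the length of the transition sequence. The statement to prove is that $p(\vec{u}) \step{w} q(\vec{v})$ holds if and only if both (a) $w$ is a path from $p$ to $q$ and (b) $\vec{v} = w(\vec{u})$.

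For the base case $k = 0$, we have $w = \varepsilon$. By definition, $p(\vec{u}) \step{\varepsilon} q(\vec{v})$ holds exactly when $p = q$ and $\vec{u} = \vec{v}$. On the other side, $\varepsilon$ is trivially a path from $p$ to $p$ (the empty path, requiring $p = q$), and $\varepsilon(\vec{u}) = \vec{u}$ by definition, so condition (b) reads $\vec{v} = \vec{u}$. Hence both sides are equivalent to $p = q$ and $\vec{u} = \vec{v}$.

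For the inductive step, I would write $w = w' t$ with $w' \in T^{k}$ and $t \in T$, and assume the claim holds for $w'$. First I would decompose $p(\vec{u}) \step{w' t} q(\vec{v})$ using the extension of $\step{}$ to sequences: it holds iff there is an intermediate configuration $r(\vec{x})$ with $p(\vec{u}) \step{w'} r(\vec{x})$ and $r(\vec{x}) \step{t} q(\vec{v})$. By the induction hypothesis, the first part is equivalent to $w'$ being a path from $p$ to $r$ together with $\vec{x} = w'(\vec{u})$; by the definition of a single step, the second is equivalent to $r = \src{t}$, $q = \tgt{t}$, and $\vec{v} = f_t(\vec{x}) = \tmat[t] \cdot \vec{x} + \tvec[t]$. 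Combining these, $w' t$ is a path from $p$ to $q$ precisely when $w'$ is a path from $p$ to $\src{t}$ and $\tgt{t} = q$, which matches condition (a); and substituting $\vec{x} = w'(\vec{u})$ gives $\vec{v} = \tmat[t] \cdot w'(\vec{u}) + \tvec[t] = w't(\vec{u})$, which is exactly condition (b) by the recursive definition of $w(\vec{u})$.

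This argument is essentially routine, so I do not expect any serious obstacle; the only point requiring mild care is handling the existential quantifier over the intermediate configuration correctly in both directions, making sure that the value $\vec{x}$ forced by the induction hypothesis is the same one fed into the final transition. Since the affine $\Z$-VASS semantics is non-blocking (every transition is applicable to every configuration), there is no guard condition to verify, which keeps the correspondence between the syntactic definition of $w(\vec{u})$ and the semantic step relation clean.
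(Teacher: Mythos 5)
Your proof is correct and matches the paper exactly: the paper itself dispatches this lemma with the remark that ``a simple induction yields the characterization,'' and your induction on $|w|$, splitting off the last transition $t$ and using the recursive definitions of $\step{w}$ and $wt(\vec{u}) = \tmat[t]\cdot w(\vec{u}) + \tvec[t]$, is precisely that argument spelled out. Your closing observation that non-blocking semantics makes graph-theoretic paths coincide with executable transition sequences is the right point of mild care, and nothing is missing.
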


Testing for reachability with Lemma~\ref{lem:reach:charac} requires
evaluating $w(\vec{u})$. This value can be evaluated conveniently as
follows:

\begin{lem} \label{lem:effect:decomp}
  For every $w=w_1w_2\cdots w_k \in T^k$ and $\vec{u} \in \Z^d$, the
  following holds:
  \begin{align}\label{lem:effect:equality}
    w(\vec{u}) = \tmat[w] \cdot \vec{u} + \sum_{i=1}^{k} \tmat(w_{i+1}
    w_{i+2} \cdots w_{k}) \cdot \tvec[w_i].
  \end{align}
  Moreover, $w(\vec{u}) = \tmat[w] \cdot \vec{u} + w(\vec{0})$.
\end{lem}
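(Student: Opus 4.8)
The plan is to prove Lemma~\ref{lem:effect:decomp} by induction on the length $k$ of the sequence $w$, using the recursive definitions of $\tmat[w]$ and $w(\vec{u})$ given just before the statement. The base case $k=0$ (i.e.\ $w = \varepsilon$) is immediate: the definitions give $\tmat[\varepsilon] = \mat{I}$ and $\varepsilon(\vec{u}) = \vec{u}$, while the summation in~\eqref{lem:effect:equality} is empty, so both sides equal $\vec{u}$. This matches the claimed identity since $\tmat[\varepsilon] \cdot \vec{u} = \vec{u}$.

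For the inductive step, I would write $w = w_1 \cdots w_k$ with $k \geq 1$ and split off the \emph{last} letter, setting $w' = w_1 \cdots w_{k-1}$ so that $w = w' w_k$. The key is that the recursive definitions unfold on the right: by definition, $w(\vec{u}) = w' w_k(\vec{u}) = \tmat[w_k] \cdot w'(\vec{u}) + \tvec[w_k]$ and $\tmat[w] = \tmat[w_k] \cdot \tmat[w']$. Applying the induction hypothesis to $w'(\vec{u})$ and substituting, I would obtain
\begin{align*}
  w(\vec{u}) = \tmat[w_k] \cdot \left( \tmat[w'] \cdot \vec{u} + \sum_{i=1}^{k-1} \tmat(w_{i+1} \cdots w_{k-1}) \cdot \tvec[w_i] \right) + \tvec[w_k].
\end{align*}
Distributing $\tmat[w_k]$ turns the leading term into $\tmat[w_k] \cdot \tmat[w'] \cdot \vec{u} = \tmat[w] \cdot \vec{u}$, and pulls it inside each summand so that $\tmat(w_{i+1} \cdots w_{k-1})$ becomes $\tmat[w_k] \cdot \tmat(w_{i+1} \cdots w_{k-1}) = \tmat(w_{i+1} \cdots w_k)$ for $i \le k-1$; the trailing $\tvec[w_k]$ is exactly the missing $i = k$ term of the target sum (where the matrix product $\tmat(w_{k+1} \cdots w_k)$ is empty and equals $\mat{I}$). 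Reassembling gives precisely~\eqref{lem:effect:equality}.

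The main thing to get right is the bookkeeping on the matrix-product indices under the recursive convention $\tmat[wt] = \tmat[t] \cdot \tmat[w]$, which multiplies new matrices \emph{on the left}; this is why splitting off the last letter rather than the first is the natural move, and why the factor $\tmat[w_k]$ distributes cleanly across the induction hypothesis to promote each $\tmat(w_{i+1}\cdots w_{k-1})$ to $\tmat(w_{i+1}\cdots w_k)$. Finally, the ``moreover'' clause follows by specializing~\eqref{lem:effect:equality} to $\vec{u} = \vec{0}$: the first term vanishes, yielding $w(\vec{0}) = \sum_{i=1}^{k} \tmat(w_{i+1} \cdots w_k) \cdot \tvec[w_i]$, and substituting this back into the general formula gives $w(\vec{u}) = \tmat[w] \cdot \vec{u} + w(\vec{0})$.
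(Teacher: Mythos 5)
Your proof is correct and follows essentially the same route as the paper: induction on $k$ splitting off the last transition $w_k$, unfolding the recursive definitions $\tmat[\sigma w_k] = \tmat[w_k]\cdot\tmat[\sigma]$ and $\sigma w_k(\vec{u}) = \tmat[w_k]\cdot\sigma(\vec{u}) + \tvec[w_k]$, and distributing $\tmat[w_k]$ across the induction hypothesis. Your handling of the ``moreover'' clause (specializing to $\vec{u} = \vec{0}$ and substituting back) is the same observation the paper makes by subtracting the two instances of~\eqref{lem:effect:equality}.
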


\begin{proof}[Proof of Lemma~\ref{lem:effect:decomp}]
  We prove~\eqref{lem:effect:equality} by induction on $k$. The base
  case follows from $\varepsilon(\vec{u}) = \vec{u} = \mat{I} \cdot
  \vec{u} + \vec{0} = \tmat[\varepsilon] \cdot \vec{u} +
  \vec{0}$. Assume that $k > 0$ and that the claim holds for sequences
  of length $k - 1$. For simplicity we denote $\sigma \defeq w_1
  \cdots w_{k-1}$. We have:
  \begin{align}
    w(\vec{u})
    &= \sigma w_k(\vec{u}) \nonumber \\
    &= \tmat[w_k] \cdot \sigma (\vec{u}) + \tvec[w_k] \label{eq:swu} \\
    &= \tmat[w_k] \cdot \left(\tmat[\sigma] \cdot \vec{u} +
    \sum_{i=1}^{k-1} \tmat(w_{i+1} w_{i+2} \cdots w_{k-1}) \cdot
    \tvec[w_i]\right) + \tvec[w_k] \label{eq:ind:hyp} \\
    &= \tmat[w_k] \cdot \tmat[\sigma] \cdot \vec{u} + \sum_{i=1}^{k-1}
    \tmat[w_k] \cdot \tmat(w_{i+1} w_{i+2} \cdots w_{k-1}) \cdot
    \tvec[w_i] + \tvec[w_k] \nonumber \\
    &= \tmat[\sigma w_k] \cdot \vec{u} + \sum_{i=1}^{k-1}
    \tmat(w_{i+1} w_{i+2} \cdots w_{k}) \cdot \tvec[w_i] + \tvec[w_k]
    \label{eq:mat:def} \\
    &= \tmat[w] \cdot \vec{u} + \sum_{i=1}^{k} \tmat(w_{i+1} w_{i+2}
    \cdots w_{k}) \cdot \tvec[w_i] \nonumber
  \end{align}
  where~\eqref{eq:swu}, \eqref{eq:ind:hyp} and~\eqref{eq:mat:def}
  follow respectively by definition of $\sigma w_k(\vec{u})$, by
  induction hypothesis and by definition of $M(\sigma w_k)$.
  
  The last part of the lemma follows from
  applying~\eqref{lem:effect:equality} to $w(\vec{0})$ and
  $w(\vec{u})$, and observing that subtracting them results in
  $w(\vec{u}) - w(\vec{0}) = \tmat[w] \cdot \vec{u}$.
\end{proof}

Observe that Lemma~\ref{lem:effect:decomp} is trivial for the
particular case of $\Z$-VASS. Indeed, we obtain $w(\vec{u}) = \vec{u}
+ \sum_{i=1}^{k} \tvec[w_i]$, which is the sum of transition vectors
as expected for a $\Z$-VASS.

\subsection{Reachability from the origin}

We make use of Lemmas~\ref{lem:reach:charac}
and~\ref{lem:effect:decomp} to construct a $\Z$-VASS $\V' = (d, Q',
T')$ for the special case of Theorem~\ref{thm:monoid:to:zvass} where
the initial configuration is of the form $p(\vec{0})$. The states and
transitions of $\V'$ are defined as:
\begin{align*}
  Q' &\defeq Q \times \monoid, \\
  T' &\defeq \{((\tgt{t}, \mat{A}), \mat{I}, \mat{A} \cdot \tvec[t],
  (\src{t}, \mat{A} \cdot \tmat[t])) : \mat{A} \in \monoid, t \in T\}.
\end{align*}

The idea behind $\V'$ is to simulate a path $w$ of $\V$ backwards and
to evaluate $w(\vec{0})$ as the sum identified in
Lemma~\ref{lem:effect:decomp}. More formally, $\V'$ and $\V$ are
related as follows:
  
\begin{prop} \label{prop:theorem_for0}
  \leavevmode
  \begin{enumerate}[(a)]
  \item For every $w \in T^*$, if $p(\vec{0}) \step{w} q(\vec{v})$ in
    $\V$, then $q'(\vec{0}) \step{*} p'(\vec{v})$ in $\V'$, where $q'
    \defeq (q, \mat{I})$ and $p' \defeq (p, \tmat[w])$.

  \item If $q'(\vec{0}) \step{*} p'(\vec{v})$ in $\V'$, where $q'
    \defeq (q, \mat{I})$ and $p' \defeq (p, \mat{A})$, then there
    exists $w \in T^*$ such that $\tmat[w] = \mat{A}$ and $p(\vec{0})
    \step{w} q(\vec{v})$ in $\V$.
  \end{enumerate}
\end{prop}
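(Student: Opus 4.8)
The plan is to treat the two directions separately, both exploiting the guiding idea that $\V'$ simulates a path $w = w_1 \cdots w_k$ of $\V$ \emph{backwards}, so that the monoid component of a $\V'$-state records the suffix product $\tmat[w_k] \cdots \tmat[w_{j+1}]$ that Lemma~\ref{lem:effect:decomp} attaches to $\tvec[w_j]$ when that transition is eventually consumed. A single step of $\V'$ reading the transition associated with $t$ and $\mat{A}$ moves from $(\tgt{t}, \mat{A})$ to $(\src{t}, \mat{A} \cdot \tmat[t])$ while adding $\mat{A} \cdot \tvec[t]$; so starting at $(q, \mat{I})(\vec{0})$ and consuming $w_k, w_{k-1}, \ldots, w_1$ in this order accumulates exactly $\sum_{i} \tmat(w_{i+1} \cdots w_k) \cdot \tvec[w_i] = w(\vec{0})$ and ends in state $(p, \tmat[w])$.

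For part~(a) I would not induct on the statement as written, since peeling off the last transition $w_k$ shifts every subsequent matrix coefficient by $\tmat[w_k]$ and the induction hypothesis no longer matches. Instead I would prove the stronger, parametrised claim: for every path $w$ from $p$ to $q$ in $\V$, every $\mat{A} \in \monoid$ and every $\vec{x} \in \Z^d$,
$$(q, \mat{A})(\vec{x}) \step{*} (p, \mat{A} \cdot \tmat[w])(\vec{x} + \mat{A} \cdot w(\vec{0})) \text{ in } \V'.$$
This is proved by induction on $|w|$: write $w = w' w_k$, consume $w_k$ first (from $(q,\mat{A})$, using the $\V'$-transition carrying matrix $\mat{A}$) to reach $(\tgt{w_{k-1}}, \mat{A}\cdot\tmat[w_k])$, then apply the induction hypothesis to $w'$ with the updated matrix $\mat{A} \cdot \tmat[w_k]$. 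The two bookkeeping identities to check are $\mat{A}\cdot\tmat[w_k]\cdot\tmat[w'] = \mat{A}\cdot\tmat[w]$ for the target state and $\tvec[w_k] + \tmat[w_k]\cdot w'(\vec{0}) = w(\vec{0})$ for the accumulated vector, the latter being exactly the recursive definition of $w' w_k(\vec{0})$. Taking $\mat{A} = \mat{I}$, $\vec{x} = \vec{0}$ and using $\vec{v} = w(\vec{0})$ from Lemma~\ref{lem:reach:charac} yields~(a).

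For part~(b) I would read a given run of $\V'$ from $(q,\mat{I})(\vec{0})$ to $(p,\mat{A})(\vec{v})$ as a sequence of underlying transitions $t_1, t_2, \ldots, t_m \in T$. Each $\V'$-step is forced in its monoid component: if $\mat{A}_{j-1}$ is the matrix of the current state, then the transition consumed must carry precisely $\mat{A}_{j-1}$, the next state is $(\src{t_j}, \mat{A}_{j-1}\cdot\tmat[t_j])$ with $\tgt{t_j}$ equal to the current control-state, and the value increases by $\mat{A}_{j-1}\cdot\tvec[t_j]$. Hence the control-states chain as $\tgt{t_j} = \src{t_{j-1}}$, so the \emph{reversed} sequence $w \defeq t_m t_{m-1}\cdots t_1$ is a genuine path from $p$ to $q$ in $\V$. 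It then remains to verify $\tmat[w] = \mat{A}$ and $\vec{v} = w(\vec{0})$: the former follows by telescoping $\mat{A}_m = \mat{I}\cdot\tmat[t_1]\cdots\tmat[t_m] = \tmat[w]$, and the latter by matching the telescoped sum $\vec{v} = \sum_{j} \mat{A}_{j-1}\cdot\tvec[t_j]$ against the formula for $w(\vec{0})$ given by Lemma~\ref{lem:effect:decomp} under the index substitution $j \mapsto m+1-j$. Lemma~\ref{lem:reach:charac} then gives $p(\vec{0}) \step{w} q(\vec{v})$.

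I expect the main obstacle to be getting part~(a)'s induction to close: the statement as phrased is not self-strengthening, and one must guess the right generalisation (an arbitrary leading matrix $\mat{A}$ and offset $\vec{x}$) before the recursion on $w(\vec{0})$ falls out cleanly. Part~(b) is conceptually routine once the backward correspondence is fixed, but the index reversal in matching $\sum_j \mat{A}_{j-1}\cdot\tvec[t_j]$ to Lemma~\ref{lem:effect:decomp} demands care.
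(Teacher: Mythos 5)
Your proof is correct and rests on the same backward-simulation idea as the paper's (the monoid component of a $\V'$-state records the suffix product, matched to Lemma~\ref{lem:effect:decomp}), but your verification differs in part~(a). The paper does not induct at all: it writes down the simulating path $w' = w_1' \cdots w_k'$ explicitly (with $w_i'$ handling $w_{k-i+1}$ and carrying the matrix $\mat{A}_{i-1} = \tmat[w_{k-i+2} \cdots w_k]$) and then evaluates $w'(\vec{0})$ in one shot via the closed-form sum of Lemma~\ref{lem:effect:decomp}, matching it against $w(\vec{0})$ by re-indexing. Your parametrised invariant $(q, \mat{A})(\vec{x}) \step{*} (p, \mat{A} \cdot \tmat[w])(\vec{x} + \mat{A} \cdot w(\vec{0}))$ replaces that global computation by the one-step recursion $w' w_k(\vec{0}) = \tmat[w_k] \cdot w'(\vec{0}) + \tvec[w_k]$, and your diagnosis is accurate: the unstrengthened statement does not close under induction, and quantifying over the leading matrix $\mat{A} \in \monoid$ (closure of $\monoid$ under multiplication is what keeps $\mat{A} \cdot \tmat[w_k]$ legal) together with the offset $\vec{x}$ is exactly the needed generalisation; the trade-off is small, the paper's sum formula being reusable in part~(b) while your induction is more self-contained. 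In part~(b) you are in fact more careful than the paper's text: there, the transitions $t_i$ extracted from $w_i' = ((p_i, \mat{A}_i), \mat{I}, \vec{b}_i, (q_i, \mat{B}_i))$ satisfy $\src{t_i} = q_i = p_{i+1} = \tgt{t_{i+1}}$, so the path of $\V$ from $p$ to $q$ is the \emph{reversal} $t_k t_{k-1} \cdots t_1$ --- the order you take --- whereas the paper sets $w \defeq t_1 t_2 \cdots t_k$ and calls it ``readily seen'' to be a path; as literally written this needs the reversal you perform. Likewise, your telescoping $\mat{A}_m = \tmat[t_1] \cdots \tmat[t_m] = \tmat[w]$ and the substitution $j \mapsto m+1-j$ matching $\sum_j \mat{A}_{j-1} \cdot \tvec[t_j]$ against Lemma~\ref{lem:effect:decomp} spell out what the paper compresses into ``Lemma~\ref{lem:effect:decomp} can be applied as in the previous implication.''
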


\begin{proof}
  (a)~By Lemma~\ref{lem:reach:charac}, $\V$ has a
  path $w \in T^*$ such that $w(\vec{0}) = \vec{v}$. Let $k \defeq
  |w|$. Let $\mat{A}_0 \defeq \mat{I}$, and for every $i \in [k]$ let
  \begin{align*}
    \mat{A}_i &\defeq \tmat[w_{k-i+1} \cdots w_{k-1} w_k], \\
    \vec{b}_i &\defeq \mat{A}_{i-1} \cdot \tvec[w_{k-i+1}], \\
    w_i' &\defeq ((\tgt{w_{k-i+1}}, \mat{A}_{i-1}), \mat{I}, \vec{b}_i,
    (\src{w_{k-i+1}}, \mat{A}_i)).
  \end{align*}
  We claim that $w' \defeq w_1' w_2' \cdots w_k'$ is such that $(q,
  \mat{A}_0) \step{w'} (p, \mat{A}_k)$ in $\V'$. Note that the
  validity of the claim completes the proof since $\mat{A}_0 =
  \mat{I}$ and $\mat{A}_k = \tmat[w]$.

  It follows immediately from the definition of $T'$ that $w_i' \in
  T'$ for every $i \in [k]$ and hence that $w'$ is a path from $(q,
  \mat{A}_0)$ to $(p, \mat{A}_k)$. By Lemma~\ref{lem:reach:charac}, it
  remains to show that $w'(\vec{0}) = \vec{v}$:
  \begingroup
  \allowdisplaybreaks
  \begin{align*}
    w'(\vec{0})
    &= \sum_{i=1}^{k} \tmat(w_{i+1}' w_{i+2}' \cdots w_{k}') \cdot
    \tvec[w_i']
    && \text{(by Lemma~\ref{lem:effect:decomp} applied to
      $w'(\vec{0})$)} \\
    &= \sum_{i=1}^{k} \tvec[w_i']
    && \text{(by $\tmat[w_i'] = \mat{I}$ for every $i \in [k]$)} \\
    &= \sum_{i=1}^{k} \mat{A}_{i-1} \cdot \tvec[w_{k-i+1}]
    && \text{(by definition of $\tvec[w_i']$)} \\
    &= \sum_{i=1}^{k} \tmat(w_{k-i+2} w_{k-i+1} \cdots w_k) \cdot
    \tvec[w_{k-i+1}]
    && \text{(by definition of $\mat{A}_{i-1}$)} \\
    &= \sum_{i=1}^{k} \tmat(w_{i+1} w_{i+2} \cdots w_k) \cdot
    \tvec[w_i]
    && \text{(by inspection of the sum)} \\
    &= w(\vec{0})
    && \text{(by Lemma~\ref{lem:effect:decomp} applied to $w(\vec{0})$).}
  \end{align*}
  \endgroup

  (b)~Similarly, by Lemma~\ref{lem:reach:charac}, there
  exists a path $w'$ of $\V'$ such that $w'(\vec{0}) = \vec{v}$, and
  it suffices to exhibit a path $w \in T^*$ from $p$ to $q$ in $\V$
  such that $w(\vec{0}) = \vec{v}$ and $\tmat[w] = \mat{A}$. Let $k
  \defeq |w'|$. For every $i \in [k]$, let $w_i' = ((p_i, \mat{A}_i),
  \mat{I}, \vec{b}_i, (q_i, \mat{B}_i))$. By definition of $T'$, for
  every $i \in [k]$, there exists a (possibly non unique) transition
  $t_i \in T$ such that $\tgt{t} = p_i$, $\src{t} = q_i$, $\vec{b}_i =
  \mat{A}_i \cdot \tvec[t]$ and $\mat{B}_i = \mat{A}_i \cdot
  \tmat[t]$. We set $w \defeq t_1 t_2 \cdots t_k$. It is readily seen
  that $w$ is a path from $p$ to $q$. To prove $w(\vec{0}) = \vec{v}$
  and $\tmat[w] = \mat{A}$, Lemma~\ref{lem:effect:decomp} can be
  applied as in the previous implication.
\end{proof}

\subsection{Reachability from an arbitrary configuration}

We now construct the $\Z$-VASS $\V''= (2d, Q'', T'')$ of
Theorem~\ref{thm:monoid:to:zvass} which is obtained mostly from
$\V'$. The states of $\V''$ are defined as:
\begin{align*}
  Q'' &\defeq Q \cup Q' \cup \overline{Q'}
  && \text{where}\quad \overline{Q'} \defeq \{(\overline{q}, \mat{A}) :
  (q, \mat{A}) \in Q'\}.
\end{align*}
To simplify notation, given two vectors $\vec{u}, \vec{v} \in
\Z^d$ we write $(\vec{u}, \vec{v})$ for the vector of $\Z^{2d}$ equal
to $\vec{u}$ on the first $d$ components and equal to $\vec{v}$ on the
last $d$ components. The set $T''$ consists of four disjoint subsets
of transitions $T_\text{simul} \cup T_\text{end} \cup T_\text{mult}
\cup T_\text{final}$ working in four sequential stages. Intuitively,
these transitions allow (1)~$\V''$ to simulate a path $w$ of $\V$
backwards in order to compute $w(\vec{0})$; (2)~guess the end of this
path; (3)~compute $\tmat[w] \cdot \vec{u}$ by using the fact that
$\tmat[w]$ is stored in its control-state; and (4) guess the end of
this matrix multiplication.

The first set of transitions is defined as:
\begin{align*}
  T_\text{simul} &\defeq \{(\src{t}, \mat{I}, (\vec{0}, \tvec[t]),
  \tgt{t}) : t \in T'\}.
\end{align*}
Its purpose is to simulate $T'$ on the last $d$ counters. The second
set is defined as:
\begin{align*}
  T_\text{end} &\defeq \{((q, \mat{A}), \mat{I}, (\vec{0}, \vec{0}),
  (\overline{q}, \mat{A})) : (q, \mat{A}) \in Q'\},
\end{align*}
and its purpose is to nondeterministically guess the end of a run in
$\V'$ by simply marking $q$. The third set is defined as:
\begin{alignat*}{2}
  T_\text{mult}
  &\defeq\ && \{((\overline{q}, \mat{A}), \mat{I}, (-\vec{e}_i,
  \mat{A} \cdot \vec{e}_i), (\overline{q}, \mat{A})) : (\overline{q},
  \mat{A}) \in \overline{Q'}, i \in [d]\} \cup {} \\
  &&& \{((\overline{q}, \mat{A}), \mat{I}, (\vec{e}_i, -\mat{A} \cdot
  \vec{e}_i), (\overline{q}, \mat{A})) : (\overline{q}, \mat{A}) \in
  \overline{Q'}, i \in [d]\},
\end{alignat*}
where $\vec{e}_i$ is the $i$-th unit vector such that $\vec{e}_i(i) = 1$
and $\vec e_i(j)=0$ for all $i\neq j$. The
purpose of $T_\text{mult}$ is to compute $\mat{A} \cdot \vec{u}$ from
the $d$ first counters onto the $d$ last counters. Finally,
$T_\text{final}$ is defined as:
\begin{align*}
 T_\text{final} &\defeq \{((\overline{q}, \mat{A}), \mat{I}, (\vec{0},
 \vec{0}), q) : (\overline{q}, \mat{A}) \in \overline{Q'}\},
\end{align*}
and its purpose is to guess the end of the matrix multiplication
performed with $T_\text{mult}$.

We may now prove Theorem~\ref{thm:monoid:to:zvass}:

\begin{proof}[Proof of Theorem~\ref{thm:monoid:to:zvass}]
  First, note that we obtain
  \begin{align*}
    |Q''|
    &= 2 \cdot |Q'| + |Q| \\
    &\leq 3 \cdot |Q| \cdot |\monoid|, \\[5pt]    
    |T''|
    &= |T'| + |Q'| + 2d \cdot |Q'| + |Q'| \\
    &= |T'| + 2(d+1) \cdot |Q'| \\
    &= |T| \cdot |\monoid| + 2(d+1) \cdot |Q| \cdot |\monoid| \\
    &\leq 4d \cdot |\monoid| \cdot (|T| + |Q|).
  \end{align*}
  Moreover, we have:
  \begin{align*}
    \norm{T''}
    &= \max(\norm{T'}, \norm{\monoid}) \\
    &\leq \max(\norm{\monoid} \cdot \norm{T}, \norm{\monoid}) \\
    &= \norm{\monoid} \cdot \norm{T}.
  \end{align*}

  We conclude by proving that $p(\vec{u}) \step{*} q(\vec{v})$ in $\V$
  if and only if $q'(\vec{u}, \vec{0}) \step{*} p(\vec{0}, \vec{v})$
  in $\V''$, where $q' \defeq (q, \mat{I})$.

  $\Rightarrow$) By Lemma~\ref{lem:reach:charac}, there exists a path
  $w$ of $\V$ such that $w(\vec{u}) = \vec{v}$. By definition of
  $T_\text{simul}$ and $T_\text{end}$, and by
  Proposition~\ref{prop:theorem_for0}, it is the case that
  $q'(\vec{u}, \vec{0}) \step{*} p'(\vec{u}, w(\vec{0}))$ where $p'
  \defeq (p, \tmat[w])$. The transitions of $T_\text{mult}$ allow to
  transform $(\vec{u}, w(\vec{0}))$ into $(\vec{0}, w(\vec{0}) +
  \tmat[w] \cdot \vec{u})$. Thus, using $T_\text{final}$, we can reach
  the configuration $p(w(\vec{0}) + \tmat[w] \cdot \vec{u})$. This
  concludes the proof since $w(\vec{u}) = w(\vec{0}) + \tmat[w] \cdot
  \vec{u}$ by Lemma~\ref{lem:effect:decomp}.

  $\Leftarrow$) The converse implication follows the same steps as
  the previous one. It suffices to observe that the first part of a
  run of $\V''$ defines the value $w(\vec{0})$, while the second part
  of the run defines $\tmat[w] \cdot \vec{u}$.
\end{proof}

\section{Semilinearity of affine $\Z$-VASS} \label{sec:semilinear}

A subset of $\Z^d$ is called \emph{semilinear} if it is definable by a
formula of Presburger arithmetic~\cite{Pr29}, \ie\ by a formula of
$\mathsf{FO}(\Z, +, <)$, the decidable first-order logic over $\Z$
with addition and order. Semilinear sets capture precisely finite
unions of sets of the form $\vec{b} + \N \cdot \vec{p}_1 + \N \cdot
\vec{p}_2 + \ldots + \N \cdot \vec{p}_k$ with each $\vec p_i\in \Z^d$,
and are effectively closed under basic operations such as finite sums,
intersection and complement. Those properties make semilinear sets an
important tool in many areas of computer science and find use whenever
infinite subsets of $\Z^d$ need to be manipulated.

The results of Section~\ref{sec:finite:monoid} enable us to show that
any affine $\Z$-VASS with the finite-monoid property has a semilinear
reachability relation:

\begin{thm} \label{thm:finite:semilin}
  Given an afmp-$\Z$-VASS $\V = (d, Q, T)$ and $p, q \in Q$, it is
  possible to compute an existential Presburger formula $\varphi_{\V,
    p, q}$ of size at most $\mathcal{O}(\mathrm{poly}(|\V|,
  |\monoid[\V]|, \log \norm{\monoid[\V]}))$ such that $\varphi_{\V, p,
    q}(\vec{u}, \vec{v})$ holds if and only if $p(\vec{u}) \step{*}
  q(\vec{v})$ in $\V$.
\end{thm}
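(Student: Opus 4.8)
The plan is to combine the reduction of Theorem~\ref{thm:monoid:to:zvass} with the standard existential Presburger encoding of reachability in (plain) $\Z$-VASS. First I would apply Theorem~\ref{thm:monoid:to:zvass} to $\V$, $p$ and $q$, obtaining a $\Z$-VASS $\V' = (2d, Q', T')$ together with states $p', q' \in Q'$ such that $p(\vec{u}) \step{*} q(\vec{v})$ in $\V$ iff $p'(\vec{u}, \vec{0}) \step{*} q'(\vec{0}, \vec{v})$ in $\V'$. The theorem guarantees $|Q'| \leq 3 |\monoid[\V]| |Q|$, $|T'| \leq 4d |\monoid[\V]|(|Q| + |T|)$ and $\norm{T'} \leq \norm{\monoid[\V]} \cdot \norm{T}$, so the full description of $\V'$ has size $\mathcal{O}(\mathrm{poly}(|\V|, |\monoid[\V]|, \log\norm{\monoid[\V]}))$. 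It therefore suffices to build, for the $\Z$-VASS $\V'$ and fixed states $p', q'$, an existential Presburger formula defining $p'(\vec{x}) \step{*} q'(\vec{y})$ of size polynomial in $|\V'|$, and then to substitute $\vec{x} = (\vec{u}, \vec{0})$ and $\vec{y} = (\vec{0}, \vec{v})$.

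Since every matrix in $\V'$ is the identity, Lemmas~\ref{lem:reach:charac} and~\ref{lem:effect:decomp} specialize to the statement that $p'(\vec{x}) \step{*} q'(\vec{y})$ holds iff there is a path $w$ from $p'$ to $q'$ in the underlying graph with $\vec{y} = \vec{x} + \sum_i \tvec[w_i]$. The effect $\sum_i \tvec[w_i]$ depends only on the \emph{Parikh image} of $w$, i.e.\ on the vector $\vec{n} \in \N^{T'}$ recording how often each transition is used, because the updates of a $\Z$-VASS commute additively. Hence reachability reduces to the existence of a vector $\vec{n} \in \N^{T'}$ that (i)~is the Parikh image of some path from $p'$ to $q'$ and (ii)~satisfies $\sum_{t \in T'} \vec{n}(t) \cdot \tvec[t] = (-\vec{u}, \vec{v})$, the latter being the counter equation obtained from $(\vec{0}, \vec{v}) = (\vec{u}, \vec{0}) + \sum_i \tvec[w_i]$.

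I would express condition~(i) in existential Presburger arithmetic via the flow encoding that underlies the NP upper bound for $\Z$-VASS reachability~\cite{HH14}. With one variable $\vec{n}(t)$ per transition, one writes Kirchhoff-style conservation constraints asserting that in the multiset described by $\vec{n}$ the in- and out-degree of every control-state coincide, except at $p'$ and $q'$ where they differ by one in the appropriate direction; one then adds a connectivity certificate---for instance auxiliary ``reachability level'' variables witnessing that every transition with $\vec{n}(t) > 0$ lies on a path from $p'$ inside the support of $\vec{n}$---ensuring that the chosen transitions form a single connected walk rather than a disjoint union of cycles. This connectivity requirement is the only delicate ingredient and the main obstacle, as the bare flow equations admit spurious disconnected solutions; I would discharge it with the standard polynomial-size encoding of Parikh images of paths in a finite graph, yielding an existential formula $\Phi_{\mathrm{path}}(\vec{n})$ of size polynomial in $|\V'|$.

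Finally I would set $\varphi_{\V, p, q}(\vec{u}, \vec{v}) \defeq \exists \vec{n} \in \N^{T'}.\ \Phi_{\mathrm{path}}(\vec{n}) \wedge \bigl(\textstyle\sum_{t \in T'} \vec{n}(t) \cdot \tvec[t] = (-\vec{u}, \vec{v})\bigr)$, which is existential Presburger and, by the above, holds exactly when $p(\vec{u}) \step{*} q(\vec{v})$ in $\V$. Both $\Phi_{\mathrm{path}}$ and the linear counter equation are polynomial in $|\V'|$, and the coefficients $\tvec[t]$ are bounded in absolute value by $\norm{\monoid[\V]} \cdot \norm{T}$, so the total formula size is $\mathcal{O}(\mathrm{poly}(|\V|, |\monoid[\V]|, \log\norm{\monoid[\V]}))$, as required.
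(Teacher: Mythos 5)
Your proposal is correct and follows essentially the same route as the paper: apply Theorem~\ref{thm:monoid:to:zvass} to obtain the $\Z$-VASS $\V'$, then define $\varphi_{\V,p,q}$ by instantiating an existential Presburger formula for $\Z$-VASS reachability at $(\vec{u},\vec{0})$ and $(\vec{0},\vec{v})$. The only difference is that the paper black-boxes the second step by citing~\cite[Sect.~3]{HH14}, whereas you inline the standard Parikh-image/flow encoding (Kirchhoff constraints plus a connectivity certificate) that underlies that citation, correctly identifying the spurious-disconnected-cycles issue as the one delicate point.
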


\begin{proof}
  By Theorem~\ref{thm:monoid:to:zvass}, there exist an effectively
  computable $\Z$-VASS $\V' = (d', Q', T')$ and $p', q' \in Q'$ such
  that $d' = 2 \cdot d$, $|Q'| \leq 3 \cdot |\monoid| \cdot |Q|$,
  $|T'| \leq 4d \cdot |\monoid| \cdot (|Q| + |T|)$, $\norm{T'} \leq
  \norm{\monoid} \cdot \norm{T}$ and
  \begin{align}
  p(\vec{u}) \step{*} q(\vec{v}) \text{ in } \V \text{ if and only if
  } p'(\vec{u}, \vec{0}) \step{*} q'(\vec{0}, \vec{v}) \text{ in }
  \V'. \label{eq:equiv}
  \end{align}
  By~\cite[Sect.~3]{HH14}, we can compute an existential Presburger
  formula $\psi$ of linear size in $|\V'|$ such that $\psi(\vec{x},
  \vec{x}', \vec{y}, \vec{y}')$ holds if and only if $p'(\vec{x},
  \vec{x}') \step{*} q'(\vec{y}, \vec{y}')$ in
  $\V'$. By Equation~\eqref{eq:equiv}, the formula $\varphi_{\V, p, q}(\vec{x},
  \vec{y}) \defeq \psi(\vec{x}, \vec{0}, \vec{0}, \vec{y})$ satisfies
  the theorem.
\end{proof}

It was observed in~\cite{FL02,Boi98} that the reachability relation of
a $\Z$-VASS $\V = (d, Q, T)$, such that $|Q| = |\tmat[\V]| = 1$, is
semilinear if and only if $\monoid[\V]$ is
finite. Theorem~\ref{thm:finite:semilin} shows that if we do not bound
the number of states and matrices, \ie\ drop the assumption $|Q| =
|\tmat[\V]| = 1$, then ($\Leftarrow$) remains true. It is
natural to ask whether ($\Rightarrow$) also remains true.

\begin{figure}[h]
  \centering
  \begin{tikzpicture}[->, node distance=3cm, auto, very thick, scale=0.9, transform shape, font=\large]
    \tikzset{every state/.style={inner sep=1pt, minimum size=25pt}}
    
    \node[state] (p) {$p$};
    \node[state] (q) [right=5cm of p] {$p$};
    \node[state] (r) [right= of q] {$q$};
    
    \path[->]
    (p) edge[loop above] node {
      $\begin{pmatrix}1 & 1 \\ 1 & 1\end{pmatrix}, \vec{0}$
    } (p)
    
    (p) edge[in=10, out=40, loop] node[yshift=-10pt] {
      $\mat{I}, \begin{pmatrix}1 \\ 0\end{pmatrix}$
    } (p)      
    
    (p) edge[in=-40, out=-10, loop] node[yshift=10pt] {
      $\mat{I}, \begin{pmatrix}0 \\ 1\end{pmatrix}$
    } (p)      

    (p) edge[in=140, out=170, loop] node[yshift=-10pt] {
      $\mat{I}, \begin{pmatrix}-1 \\ 0\end{pmatrix}$
    } (p)      
    
    (p) edge[in=-170, out=-140, loop] node[yshift=10pt] {
      $\mat{I}, \begin{pmatrix}0 \\ -1\end{pmatrix}$
    } (p)      

    (q) edge[bend left] node {
      $\begin{pmatrix}1 & 1 \\ 1 & 1\end{pmatrix}, \vec{0}$
    } (r)

    (r) edge[bend left] node {
      $\begin{pmatrix}0 & 0 \\ 0 & 0\end{pmatrix}, \vec{0}$
    } (q)
    ;
  \end{tikzpicture}
  \caption{Examples of affine $\Z$-VASS with infinite monoids and
    semilinear reachability relations.}\label{fig:inf:semilinear}
\end{figure}
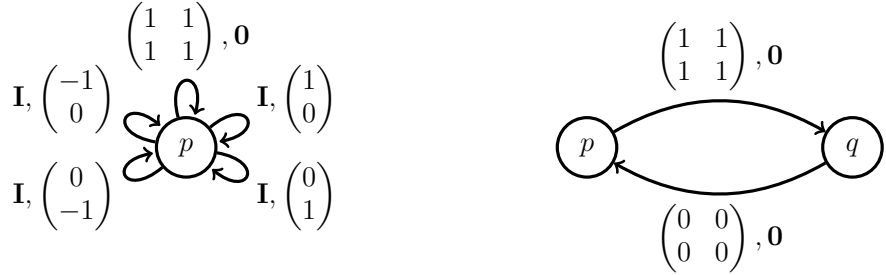

Let $\V_1$ and $\V_2$ be the affine $\Z$-VASS illustrated in
Figure~\ref{fig:inf:semilinear} from left to right respectively. Note
that $\monoid[\V_1]$ and $\monoid[\V_2]$ are both infinite due to the
matrix made only of $1$s. Moreover, the reachability relations of
$\V_1$ and $\V_2$ are semilinear since the former can reach any target
configuration from any initial configuration, and since the latter can
only generate finitely many vectors due to the zero matrix. Since
$\V_1$ has a single control-state, $|\tmat[\V_1]| = |\tmat[\V_2]| = 2$
and $\tvec[\V_2] = \{\vec{0}\}$, any simple natural extension of the
characterization of semilinearity in terms of the number of
control-states, matrices and vectors fails.

It is worth noting that an affine $\Z$-VASS with an infinite monoid
may have a non semilinear reachability relation. Indeed,
Figure~\ref{fig:example} depicts a transfer + copy $\Z$-VASS with an
infinite monoid and such that $\{\vec{v} : p(1, 1) \step{*}
q(\vec{v})\} = \{(2^n, 2^n) : n \in \N\}$, which is known to be non
semilinear. Moreover, this proves that even the reachability set from
$p(1, 1)$ is not semilinear.

\section{Complexity of reachability for afmp-$\Z$-VASS} \label{sec:upper:bounds}

In this section, we use the results of Section~\ref{sec:finite:monoid}
to show that reachability belongs to PSPACE for a large class of
afmp-$\Z$-VASS encompassing all variants discussed in
Section~\ref{sec:preliminaries}. Moreover, we give a novel proof to
the known NP membership of reachability for reset $\Z$-VASS.

For every finite set $G_d \subseteq \Z^{d \times d}$, let $\langle G_d
\rangle$ be the monoid generated by $G_d$. We have:

\begin{thm} \label{thm:pspace:upper}
  Let $\class = \bigcup_{d \geq 1} \class_d$ be a class of matrices
  such that $\class_d$ is finite for every $d \geq 1$. It is the case
  that $\reach[\class] \in \text{PSPACE}$ if there exists a polynomial
  $\mathrm{poly}$ such that $|\langle G_d \rangle| + \norm{\langle G_d
    \rangle} \leq 2^{\mathrm{poly}(d + \log\norm{G_d})}$ for every $d
  \geq 1$ and every finite set $G_d$ such that $\langle G_d \rangle
  \subseteq \class_d$.
\end{thm}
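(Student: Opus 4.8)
The plan is to reduce $\reach[\class]$ to reachability in an ordinary $\Z$-VASS, invoke the known $\text{NP} \subseteq \text{PSPACE}$ membership for $\Z$-VASS reachability, and verify that the reduction keeps everything within a polynomial-space budget. Concretely, given an input affine $\Z$-VASS $\V = (d, Q, T)$ with $\monoid[\V] \subseteq \class_d$, I would apply Theorem~\ref{thm:monoid:to:zvass} to obtain a $\Z$-VASS $\V'$ together with $p', q'$ such that reachability in $\V$ corresponds to reachability in $\V'$. The key size bounds from that theorem are $d' = 2d$, $|Q'| \leq 3 \cdot |\monoid[\V]| \cdot |Q|$, $|T'| \leq 4d \cdot |\monoid[\V]| \cdot (|Q| + |T|)$, and $\norm{T'} \leq \norm{\monoid[\V]} \cdot \norm{T}$.

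The central observation is that the generating set $G_d \defeq \tmat[\V]$ satisfies $\langle G_d \rangle = \monoid[\V] \subseteq \class_d$ and $\norm{G_d} \leq \norm{T}$, so the hypothesis applies and yields $|\monoid[\V]| + \norm{\monoid[\V]} \leq 2^{\mathrm{poly}(d + \log \norm{T})}$. Since $\log \norm{T}$ and $d$ are both linear in $|\V|$ (numbers are encoded in binary), this bound is $2^{\mathrm{poly}(|\V|)}$. Feeding this into the size estimates above, $|Q'|$ and $|T'|$ are at most exponential in $|\V|$, while $\log \norm{T'} = \mathcal{O}(\log \norm{\monoid[\V]} + \log \norm{T})$ remains polynomial in $|\V|$. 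Thus each individual control-state and transition of $\V'$ has a \emph{polynomial-size} description, even though $\V'$ as a whole may have exponentially many of them.

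It is known that reachability for $\Z$-VASS is in NP~\cite{HH14}, hence in PSPACE; but I cannot afford to materialize $\V'$ explicitly, as it has exponentially many states and transitions. The main obstacle, therefore, is to run the $\Z$-VASS reachability procedure on $\V'$ using only polynomial space, i.e.\ to treat $\V'$ as an implicitly represented (succinct) input. The resolution is that $\V'$ admits an NPSPACE (equivalently PSPACE, by Savitch) decision procedure: a nondeterministic machine can guess a run of $\V'$ transition by transition, storing only the current control-state of $\V'$ (a pair from $Q \times \monoid[\V]$, or a marked/plain variant, each of polynomial size) and the current counter values. Crucially, individual transitions of $\V'$ are generable on the fly from $\V$ — given a control-state $(q,\mat{A})$ and a transition $t \in T$, the successor and the update vector are computable in polynomial time using the finite-monoid bound on $\norm{\monoid[\V]}$ — so no global enumeration is needed.

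The remaining point is to bound the counter values along an accepting run so that they too fit in polynomial space. Here I would appeal to the semilinear/bounded-witness structure underlying $\Z$-VASS reachability: by Theorem~\ref{thm:finite:semilin} the reachability relation of $\V$ is defined by an existential Presburger formula of size $\mathcal{O}(\mathrm{poly}(|\V|, |\monoid[\V]|, \log\norm{\monoid[\V]}))$, which under the hypothesis is of size $2^{\mathrm{poly}(|\V|)}$; standard small-model bounds for existential Presburger arithmetic then guarantee that if a solution exists, one exists whose binary encoding is polynomial in the formula size, hence of polynomial \emph{bit-length} in $|\V|$. Equivalently, one can bound the length and intermediate counter magnitudes of a shortest witnessing run of $\V'$ by $2^{\mathrm{poly}(|\V|)}$, so that counters stay representable in polynomial space and a step counter for termination fits as well. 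Combining the on-the-fly successor generation with these polynomially-bounded registers gives an NPSPACE algorithm, and Savitch's theorem places $\reach[\class]$ in PSPACE, completing the argument.
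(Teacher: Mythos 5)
Your overall architecture is the same as the paper's up to the crucial quantitative step: you reduce to the $\Z$-VASS $\V'$ via Theorem~\ref{thm:monoid:to:zvass}, observe that $\V'$ is exponentially large but has polynomial-size state and transition descriptions, and run an on-the-fly nondeterministic polynomial-space procedure, concluding by $\text{NPSPACE} = \text{PSPACE}$. The gap is in how you bound the length (and hence the intermediate counter magnitudes) of a shortest witnessing run of $\V'$. You appeal to Theorem~\ref{thm:finite:semilin}, which yields an existential Presburger formula of size $2^{\mathrm{poly}(|\V|)}$, and to small-model bounds giving a solution whose binary encoding is polynomial \emph{in the formula size}; but polynomial in $2^{\mathrm{poly}(|\V|)}$ is $2^{\mathrm{poly}(|\V|)}$ bits, i.e.\ \emph{exponential} bit-length in $|\V|$ and doubly exponential magnitudes, so your inference ``hence of polynomial bit-length in $|\V|$'' is a non sequitur. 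The problem does not disappear if you instead apply small-model bounds directly to the flow-based formula of~\cite{HH14} for $\V'$: that system has exponentially many variables \emph{and} exponentially many constraints (one flow-conservation equation per state of $\V'$), and the generic bounds on minimal solutions have the number of equations in the exponent, again giving only doubly exponential multiplicities. As written, your argument therefore yields an EXPSPACE procedure, not a PSPACE one; moreover, Theorem~\ref{thm:finite:semilin} bounds the endpoints $(\vec{u},\vec{v})$, not run lengths, so even the translation from a small Presburger witness to a short run needs the internal Parikh structure of the formula, which you do not invoke.

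The paper closes exactly this gap with two tools you are missing. First, flattability of $\Z$-VASS~\cite[Prop.~3]{BFGHM15}: every witness can be taken of the form $\alpha_0 \beta_1^{\vec{e}(1)} \alpha_1 \cdots \beta_k^{\vec{e}(k)} \alpha_k$ where the skeleton $\alpha_0\beta_1\cdots\beta_k\alpha_k$ has length at most $2 \cdot |Q'| \cdot |T'| \leq 2^{\mathrm{poly}(|\V|)}$. This confines all combinatorial path structure to an object of singly exponential size and reduces the arithmetic to the single vector equation~\eqref{eq:diophantine}, a system with only $d' = 2d$ equations over the iteration counts $\vec{e}$. Second, the small-solution bound of~\cite[Prop.~4]{CH16}, whose exponent is $d'$ (polynomial in $|\V|$) rather than the number of constraints of a flow system, gives $\norm{\vec{e}} \leq 2^{\mathrm{poly}(|\V|)}$. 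Together these yield the singly exponential bound on shortest runs that your step counter and counter registers require; the paper's machine then guesses the flat path on the fly, accumulating the displacements $\tvec[\alpha_i], \tvec[\beta_i]$ as partial sums of polynomial bit-size, and checks~\eqref{eq:diophantine} with a polynomially representable $\vec{e}$. If you substitute this flattability-plus-small-solution argument for your small-model appeal, your step-by-step simulation becomes a correct (if slightly repackaged) version of the paper's proof.
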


\begin{proof}
  Let $\V = (d, Q, T)$ be an affine $\Z$-VASS from class $\class$. Let
  $\V' = (d, Q', T')$ be the $\Z$-VASS obtained from $\V$ in
  Theorem~\ref{thm:monoid:to:zvass}. Recall that, by
  Theorem~\ref{thm:monoid:to:zvass}, $p(\vec{u}) \step{*} q(\vec{v})$
  in $\V$ if and only if $p'(\vec{u}, \vec{0}) \step{*} q'(\vec{0},
  \vec{v})$ in $\V'$. Therefore, it suffices to check the latter for
  determining reachability in $\V$.

  We invoke a result of~\cite{BFGHM15} on the flattability of
  $\Z$-VASS. By~\cite[Prop.~3]{BFGHM15}, $p'(\vec{x}) \step{*}
  q'(\vec{y})$ in $\V'$ if and only if there exist $k \leq |T'|$,
  $\alpha_0, \beta_1, \alpha_1, \ldots, \beta_k, \alpha_k \in (T')^*$
  and $\vec{e} \in \N^k$ such that
  \begin{enumerate}[(i)]
  \item $p'(\vec{x}) \step{\alpha_0 \beta_1^{\vec{e}(1)} \alpha_1 \cdots
    \beta_k^{\vec{e}(k)}} q'(\vec{y})$ in $\V'$, \label{itm:diophantine}

  \item $\beta_i$ is a cycle for every $i \in [k]$, and
    
  \item $\alpha_0 \beta_1 \alpha_1 \cdots \beta_k \alpha_k$ is a path
    from $p'$ to $q'$ of length at most $2 \cdot |Q'| \cdot
    |T'|$. \label{itm:flat:path}
  \end{enumerate}

  For every $w \in (T')^*$, let $\tvec[w] \defeq \sum_{i = 1}^{|w|}
  \tvec[w_i]$. By Lemma~\ref{lem:effect:decomp} (see the remark below
  the proof of Lemma~\ref{lem:effect:decomp}), we have $w(\vec{u}) =
  \vec{u} + \tvec[w]$ for every $\vec{u} \in \Z^d$. Thus, by
  Lemma~\ref{lem:reach:charac}, checking~\ref{itm:diophantine},
  assuming~\ref{itm:flat:path}, amounts to testing whether $\vec{e}$
  is a solution of the following system of linear Diophantine
  equations:
  \begin{align}
    \vec{x} +
    \sum_{i = 0}^k \tvec[\alpha_i] +
    \begin{pmatrix}
      \tvec[\beta_1] &
      \tvec[\beta_2] &
      \cdots &
      \tvec[\beta_k]
    \end{pmatrix} \cdot \vec{e}
    &= \vec{y}. \label{eq:diophantine}
  \end{align}
  Let $G_d \defeq \tmat[\V]$. Note that $\norm{G_d} \leq \norm{T}$ and
  that $\langle G_d \rangle = \monoid[\V]$. Let $m \defeq 2 \cdot |Q'|
  \cdot |T'|$. By Theorem~\ref{thm:monoid:to:zvass}, we have $m \leq
  48d \cdot |\monoid|^2 \cdot |Q|^2 \cdot |T|$. Thus, since $\langle
  G_d \rangle$ is a submonoid of $\class_d$, and by assumption on
  $\class_d$, we have $$m \leq 48d \cdot
  \left(2^{\mathrm{poly}(d + \log\norm{T})}\right)^2 \cdot |Q|^2 \cdot
  |T|.$$ Thus, $m$ is exponential in $|\V|$.
  
  We describe a polynomial-space non deterministic Turing machine
  $\mathcal{A}$ for testing whether $p'(\vec{x}) \step{*} q'(\vec{y})$
  in $\V'$. The proof follows from NPSPACE = PSPACE. Machine
  $\mathcal{A}$ guesses $k \leq |T'|$, a path $\pi = \alpha_0 \beta_1
  \alpha_1 \cdots \beta_k \alpha_k$ of length at most $m$ from $p'$ to
  $q'$, and $\vec{e} \in \N^k$, and tests
  whether~\eqref{eq:diophantine} holds for $\pi$. Note that we are not
  given $\V'$, but $\V$, so we must be careful for the machine to work
  in polynomial space.

  Instead of fully constructing $\V'$ and fully guessing $\pi$, we do
  both on the fly, and also construct $\tvec[\alpha_0],
  \tvec[\beta_1], \ldots, \tvec[\beta_k], \tvec[\alpha_k]$ on the fly
  as partial sums as we guess $\pi$. Note that to ensure that each
  $\beta_i$ is a cycle, we do not need to fully store $\beta_i$ but
  only its starting control-state. Moreover, note that
  $\norm{\tvec[\alpha_i]}, \norm{\tvec[\beta_i]} \leq m \cdot
  \norm{T'}$ for every $i$. By Theorem~\ref{thm:monoid:to:zvass} and
  by assumption on $\class_d$, we have
  \begin{align*}
    \norm{T'}
    &\leq \norm{\langle G_d \rangle} \cdot \norm{T} \\
    &\leq 2^{\mathrm{poly}(d + \log \norm{T})} \cdot \norm{T}.
  \end{align*}
  Hence, each $\alpha_i$ and $\beta_i$ has a binary representation of
  polynomial size in $|\V|$.

  By~\cite[Prop.~4]{CH16}, \eqref{eq:diophantine} has a solution if
  and only if it has a solution $\vec{e} \in \N^k$ such
  that $$\norm{\vec{e}} \leq \left((k + 1) \cdot
  \max\{\norm{\tvec[\beta_i]} : i \in [k]\} + \norm{\vec{x}} +
  \norm{\vec{y}} + \sum_{i=0}^k \norm{\tvec[\alpha_i]} +
  1\right)^{d'}.$$ Since $d' = 2 \cdot d$, this means that we can
  guess a vector $\vec{e} \in \N^k$ whose binary representation is of
  polynomial size, and that we can thus
  evaluate~\eqref{eq:diophantine} in polynomial time.
\end{proof}

\begin{cor}
  The reachability problem for nonnegative afmp-$\Z$-VASS is in
  PSPACE, and hence in particular for reset, permutation, transfer,
  copy and copyless $\Z$-VASS.
\end{cor}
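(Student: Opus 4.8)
The plan is to derive this directly from Theorem~\ref{thm:pspace:upper}: for any nonnegative class $\class = \bigcup_{d \geq 1} \class_d$ with every $\class_d$ finite, I would verify the quantitative hypothesis, namely that there is a polynomial $\mathrm{poly}$ with $|\langle G_d \rangle| + \norm{\langle G_d \rangle} \leq 2^{\mathrm{poly}(d + \log \norm{G_d})}$ for every finite $G_d$ with $\langle G_d \rangle \subseteq \class_d$. Writing $N \defeq \norm{G_d}$ and $\monoid \defeq \langle G_d \rangle$, the cardinality bound reduces to the norm bound: if every entry of every matrix of $\monoid$ is at most $B$, then there are at most $(B+1)^{d^2}$ candidate matrices, so $|\monoid| \leq (B+1)^{d^2}$; taking $B = \norm{\monoid}$ gives $|\monoid| \leq 2^{d^2 \log(\norm{\monoid}+1)}$, which is $2^{\mathrm{poly}(d + \log N)}$ as soon as $\norm{\monoid}$ is. Hence everything hinges on bounding $\norm{\monoid}$ exponentially in $\mathrm{poly}(d + \log N)$.

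To bound $\norm{\monoid}$, I would exploit finiteness together with nonnegativity. Since $\monoid$ is finite, the powers of every $A \in \monoid$ are eventually periodic, so every nonzero eigenvalue of $A$ is a root of unity and $\rho(A) \leq 1$. Combined with $A \geq \vec{0}$, this rules out any directed cycle in the support of $A$ whose product of entries exceeds $1$: such a cycle of length $\ell$ would give $(A^{k\ell})_{aa} \geq 2^k$ for the repeated vertex $a$, contradicting finiteness. Consequently all entries lying on cycles of the support equal $1$, and entries exceeding $1$ can only be produced along the acyclic (transient) part of the support, whose depth is at most $d$; a layered-path count then bounds every entry by $(dN)^{O(d)} = 2^{\mathrm{poly}(d + \log N)}$. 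Alternatively, and more directly, the computable finiteness bound of Mandel and Simon~\cite{MS77} for finitely generated submonoids of $\N^{d \times d}$ already yields $|\monoid| + \norm{\monoid} \leq 2^{\mathrm{poly}(d + \log N)}$, which suffices. Either way the hypothesis of Theorem~\ref{thm:pspace:upper} holds and $\reach[\class] \in \text{PSPACE}$.

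For the five named classes the argument is immediate and needs none of this subtlety: their generators are $0/1$ matrices and each defining property is preserved under multiplication. Indeed, by a routine case analysis, a product of permutation (resp.\ reset, transfer, copy, copyless) matrices is again a matrix of the same kind, since multiplying by a matrix whose columns are at most unit vectors merely selects and copies columns, and dually for rows, while a product of $0/1$ diagonal matrices stays $0/1$ diagonal. Hence every element of the monoid is again a $0/1$ matrix, so $\norm{\monoid} = 1$ and the cardinalities are at most $2^d$, $d!$, $d^d$, $d^d$ and $(d+1)^d$ respectively, each of which is $2^{\mathrm{poly}(d)}$. The hypothesis of Theorem~\ref{thm:pspace:upper} thus holds trivially for these classes.

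The main obstacle is the norm bound for a general nonnegative finite monoid in which entries larger than $1$ may occur: one must rule out super-exponential blow-up of the transient part, and the delicate point is that this bound must be exponential in $\mathrm{poly}(d + \log N)$ while being independent of $|\monoid|$, which a priori could be large. The spectral and cycle-weight analysis sketched above (or the explicit bound of~\cite{MS77}) is exactly what resolves this point; once $\norm{\monoid}$ is controlled, the counting argument and Theorem~\ref{thm:pspace:upper} finish the proof.
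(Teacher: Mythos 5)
Your high-level architecture is exactly the paper's: verify the quantitative hypothesis of Theorem~\ref{thm:pspace:upper} by bounding $|\langle G_d \rangle|$ and $\norm{\langle G_d \rangle}$ singly exponentially in $d + \log\norm{G_d}$, and observe that the five named classes need none of this since all their matrices are $0/1$, so $\norm{\monoid} = 1$ and $|\monoid| \leq 2^{d^2}$ --- the paper makes this same remark at the end of its proof. Your reduction of the cardinality bound to the norm bound via $|\monoid| \leq (\norm{\monoid}+1)^{d^2}$ is correct and a small variation on the paper, which instead quotes both bounds directly from~\cite[Theorem~A.2]{WS91} (with proof in~\cite{We87}): $|\langle G_d \rangle| \leq \norm{G_d}^{d^2(d-1)} \cdot 5^{d^3/2} \cdot d^{d^3} \cdot d^2$ and $\norm{\langle G_d \rangle} \leq \norm{G_d}^{d-1} \cdot 5^{d/2} \cdot d^d$.

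However, your self-contained justification of the key norm bound has a genuine gap. The spectral/cycle-weight argument is applied to each element $A \in \monoid$ separately and yields only that entries lying \emph{on} cycles of the support of $A$ equal $1$; it gives no bound on the off-cycle entries of an element of $\monoid$. Those entries arise as weighted path counts in the layered graph of a product $A_1 \cdots A_m$ of generators with $m$ unbounded, and the danger there is path \emph{multiplicity}, not edge weights: in Figure~\ref{fig:example} the two generators are $0/1$ matrices, each individually harmless under your per-matrix analysis, yet $(\mat{A} \cdot \mat{B})^n$ has all entries equal to $2^{n-1}$. So finiteness of the monoid must be exploited globally across arbitrary products, and extracting from it a bound of the form $\norm{G_d}^{d-1} \cdot 5^{d/2} \cdot d^d$ is precisely the nontrivial content of~\cite[Theorem~A.2]{WS91}; your phrase ``a layered-path count then bounds every entry by $(dN)^{O(d)}$'' does not supply the pumping/multiplicity analysis this requires. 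Your fallback citation is also misdirected: \cite{MS77} is what the paper invokes (after Theorem~\ref{thm:monoid:to:zvass}) merely for the \emph{computability} of a bound on a finite matrix monoid, whereas the explicit single-exponential bounds your argument needs come from~\cite{WS91,We87}. With that citation substituted, your proof coincides with the paper's; as written, the first-principles route is incomplete.
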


\begin{proof}
  Let $\class = \bigcup_{d \geq 1} \class_d$ be a class of nonnegative
  matrices. Let $d \geq 1$ and let $G_d$ be a finite set of matrices
  such that $\langle G_d \rangle \subseteq \class_d$. By~\cite[Theorem
    A.2]{WS91}, whose proof appears in~\cite{We87} written by one of
  the same authors, we have:
  \begin{alignat*}{2}
    |\langle G_d \rangle| &\leq \norm{G_d}^{d^2 \cdot (d - 1)} \cdot
    5^{d^3 / 2} \cdot {d^{d^3}} \cdot d^2 &&= 2^{d^2 \cdot (d-1) \cdot
      \log \norm{G_d} + (d^3 / 2) \cdot \log 5 + d^3 \cdot \log d + 2
      \cdot \log d},  \\
    \norm{\langle G_d \rangle} &\leq \norm{G_d}^{d-1} \cdot 5^{d/2}
    \cdot d^d &&= 2^{(d-1) \cdot \log \norm{G_d} + (d / 2) \cdot \log 5
      + d \cdot \log d}.
  \end{alignat*}
  Thus, $\class$ satisfies the requirements of
  Theorem~\ref{thm:pspace:upper}. To complete the proof, observe that
  determining whether $\monoid[\V]$ is finite can be done in time
  $\mathcal{O}(d^6 \cdot |T|)$, again by~\cite[Theorem A.2]{WS91}
  and~\cite{We87}.

  Note that this proof applies to reset, permutation, transfer, copy
  and copyless classes, respectively, as they are all
  nonnegative. However, there is a much simpler argument for these
  specific classes. Indeed, their matrices all have a max-norm of a
  most $1$ and thus their monoids contain at most $2^{d^2}$ matrices.
\end{proof}

\begin{thmC}[\cite{HH14}]\label{thm:reset:NP}
  The reachability problem for reset $\Z$-VASS belongs to NP.
\end{thmC}

\begin{proof}
  Let $\V = (d, Q, T)$ be a reset $\Z$-VASS. The proof does not follow
  immediately from Theorem~\ref{thm:monoid:to:zvass} because
  $\monoid[\V]$ can be of size up to $2^d$. We will analyze the
  construction used in the proof of Theorem~\ref{thm:monoid:to:zvass},
  where reachability in $\V$ is effectively reduced to reachability in
  a $\Z$-VASS $\V' = (d', Q', T')$. Recall that $Q' = (Q \times
  \monoid[\V]) \cup (\overline{Q} \times \monoid[\V]) \cup Q$, and
  thus that the size of $\V'$ depends only on the sizes of $Q$ and
  $\monoid[\V]$.

  It follows from the proof of Theorem~\ref{thm:monoid:to:zvass} and
  Proposition~\ref{prop:theorem_for0} that for every run
  $q'(\vec{u},\vec{0}) \step{*} p(\vec{0},\vec{v})$ in $\V'$ where $q'
  \defeq (q, \mat{I})$, there is a corresponding run $p(\vec{u})
  \step{w} q(\vec{v})$ in $\V$ for some $w \in T^*$ of length $k \ge
  0$. Moreover, the $i^\text{th}$ matrix occuring within the
  control-states of this run are is the form $\mat{A}_i$ where
  $\mat{A}_i = \mat{A}_{i-1} \cdot \mat{B}$ for some $\mat{B} \in
  \monoid[\V]$. Since $\monoid[\V]$ consists of reset matrices, it
  holds that $\mat{A}_0, \mat{A}_1, \mat{A}_2, \ldots, \mat{A}_k$ is
  monotonic, \ie\ if $\mat{A}_{i-1}$ has a $0$ somewhere on its
  diagonal, then $\mat{A}_i$ also contains $0$ in that position. It
  follows that $\mat{A}_0, \mat{A}_1, \ldots, \mat{A}_k$ is made of at
  most $d+1$ distinct matrices.

  To prove the NP upper bound we proceed as follows. We guess at most
  $d+1$ matrices of $\monoid[\V]$ that could appear in sequence
  $\mat{A}_0, \mat{A}_1, \ldots, \mat{A}_k$. We construct the
  $\Z$-VASS $\V'$ as in Theorem~\ref{thm:monoid:to:zvass}, but we
  discard each control-state of $Q'$ containing a matrix not drawn
  from the guessed matrices. Since the constructed $\Z$-VASS is of
  polynomial size, reachability can be verified in NP~\cite{HH14}.
\end{proof}

\begin{rem}
  Observe that the proof of Theorem~\ref{thm:reset:NP} holds for any
  class of affine $\Z$-VASS with a finite monoid such that every path
  of its Cayley graph contains at most polynomially many different
  vertices. For a reset $\Z$-VASS of dimension $d$, the number of
  vertices on every path of the Cayley graph is bounded by $d+1$.
\end{rem}

\section{Hardness results for reachability} \label{sec:lower:bounds}

It is known that the reachability problem for $\Z$-VASS is already
NP-hard~\cite{HH14}, which means that reachability is NP-hard for all
classes of affine $\Z$-VASS. In this section, we show that
PSPACE-hardness holds for some classes, matching the PSPACE upper
bound derived in Section~\ref{sec:upper:bounds}. Moreover, we observe
that reachability is undecidable for transfer + copy $\Z$-VASS.

\begin{thm} \label{thm:pspace:hardness}
  The reachability problem for permutation $\Z$-VASS is
  PSPACE-hard.
\end{thm}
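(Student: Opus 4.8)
The natural approach is to reduce from a canonical PSPACE-complete problem that already has the flavour of ``many bits evolving in lockstep.'' The plan is to reduce from the acceptance problem for a deterministic (or linearly-bounded) Turing machine, or more conveniently from the reachability/halting problem for a deterministic \emph{bounded} counter machine, or best of all from the membership problem for a \emph{polynomial-space} computation expressed as the iteration of a reversible/permutation-like update. The key observation to exploit is that a permutation matrix can cyclically shift or swap the contents of counters at zero additive cost, so a permutation $\Z$-VASS can maintain a large reservoir of ``slots'' and rotate among them. Since numbers are encoded in binary and the monoid of a permutation $\Z$-VASS has size up to $d!$, we have exponentially many reachable permutations available using only polynomially many counters, and this is the source of PSPACE power.

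Concretely, I would first fix the PSPACE-complete source problem and then design a dimension-$d$ permutation $\Z$-VASS whose counters encode a configuration of the source machine (tape cell contents, head position), where each source step is simulated by a short gadget. The crucial trick is that permutation matrices let us \emph{move} a value from one counter to another without destroying it (unlike resets) and without creating new values additively; additive vectors $\vec b$ are still available to write constants. So the first main step is to show how to implement, using only permutation updates plus additive vectors, the primitive operations needed for the simulation: rotating a block of counters to bring the ``current'' cell into a fixed working position, testing/branching, and advancing the head. Because updates are non-blocking over $\Z$, I cannot use the usual zero-test; instead I would encode control in the finite state set $Q$ and encode the tape/data entirely in the counters, relying on the permutation group acting on coordinates to route information. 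The second main step is to argue completeness and soundness of the simulation: a run of the permutation $\Z$-VASS from the encoded initial configuration reaches the encoded accepting configuration \emph{if and only if} the source machine accepts. Here I would invoke Lemma~\ref{lem:reach:charac} and Lemma~\ref{lem:effect:decomp} to reason about the net effect $w(\vec u)$ of a path purely algebraically, which sidesteps having to track intermediate configurations step by step.

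The third step is the size/complexity bookkeeping: verifying that the constructed $\Z$-VASS has dimension, number of control-states, and transition norms all polynomial in the size of the source instance, so that the reduction is computable in logarithmic or polynomial space. This should be routine once the gadgets are fixed.

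\medskip

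\noindent\textbf{Main obstacle.} The hard part will be faithfully simulating a PSPACE computation using \emph{only} permutation matrices, which are highly constrained: each is a bijection on coordinates, so every update merely permutes the current counter values and then adds a constant vector. Because the semantics is non-blocking over $\Z$, I lose the ability to perform guarded/zero-test transitions, so I cannot directly enforce that a guessed run corresponds to a legal computation by blocking illegal moves. The real difficulty is therefore to \emph{encode the verification of a polynomial-space computation into the final reachability equality} $p(\vec u)\step{*}q(\vec v)$ itself, exploiting that the target vector $\vec v$ is fixed and that the reachability relation is determined by the algebraic net effect of the path together with the permutation reached in the monoid. I expect the cleanest route is to encode the computation so that an \emph{incorrect} simulation necessarily produces a counter value that can never be corrected back to the required target, making the equality test in the reachability instance do the work of zero-testing; getting this encoding right against an adversarial nondeterministic run is the crux of the argument.
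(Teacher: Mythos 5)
Your plan points in the right direction---you pick the same source problem family as the paper (the paper reduces from membership for linear bounded automata, PSPACE-complete), you correctly put the control in the finite states and the tape in the counters, and you correctly diagnose the crux: with non-blocking $\Z$ semantics there are no zero-tests, so the verification of a guessed computation must be folded into the single final equality of the reachability query. But the proposal stops exactly at that crux and offers no mechanism for it, and this is a genuine gap rather than routine detail. The paper's solution is a specific \emph{unary charging scheme}: for an input of length $n$ it uses $d = n\cdot|\Gamma|+1$ counters, one counter $x_{i,a}$ per tape position $i$ and letter $a$ (the current content of cell $i$ being recorded by which counter in the group $\{x_{i,a} : a\in\Gamma\}$ carries the mass), plus one global counter $y$ maintaining the invariant $y=\sum_{i,a}x_{i,a}$. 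A transition $\delta(p,a)=(q,b,D)$ is simulated by a permutation matrix swapping $x_{i,a}$ and $x_{i,b}$ together with an additive vector that increments \emph{both} $x_{i,b}$ and $y$; this per-step toll is what makes a faulty step (pretending cell $i$ holds $a$ when it does not) leave more than one nonzero counter in the group for cell $i$. A final gadget then nondeterministically guesses one letter $a_i$ per cell and drains $x_{i,a_i}$ and $y$ in lockstep, so that the target $r_{\mathrm{acc}}(\vec{0})$ is reachable if and only if every cell group had all of its mass on a single letter, i.e.\ if and only if the simulation was faithful and accepting. Without some such scheme your ``incorrect runs can never be corrected'' requirement is unsubstantiated---note in particular that permutation updates \emph{can} move values between counters within a group, so it is not automatic that an error is irreparable; it is the monotone increments on $y$ and on the used letter, combined with the one-letter-per-cell draining phase, that pin this down.

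Two smaller points. First, your stated intuition that the $d!$-size monoid of reachable permutations is ``the source of PSPACE power'' is misleading: the paper's reduction uses only transpositions (swaps within a cell group), and the hardness comes entirely from the polynomial number of counters encoding an LBA tape, not from exploiting a large matrix monoid. Second, invoking Lemma~\ref{lem:reach:charac} and Lemma~\ref{lem:effect:decomp} to reason about the net effect $w(\vec{u})$ algebraically is unnecessary here and would not by itself substitute for the invariant argument: the soundness direction is proved by observing that every transition preserves $y=\sum_{i,a}x_{i,a}$ and that the final phase can only zero out one counter per group, which is a step-by-step invariant argument rather than a closed-form computation of the path effect.
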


\begin{proof}
  We give a reduction from the membership problem of linear bounded
  automata, which is known to be PSPACE-complete (see,
  \eg,~\cite[Sect.~9.3 and~13]{HU79}). Let $\mathcal{A} = (P, \Sigma,
  \Gamma, \delta, q^\text{ini}, q^\text{acc}, q^\text{rej})$ be a
  linear bounded automaton, where:
  \begin{itemize}
  \item $P$ is the set of states,
  \item $\Sigma \subseteq \Gamma$ is the input alphabet,
  \item $\Gamma$ is the tape alphabet,
  \item $\delta$ is the transition function, and
  \item $q^\text{ini}, q^\text{acc},q^\text{rej}$ are the initial,
    accepting and rejecting states respectively.
  \end{itemize}

  The transition function is a mapping $\delta : P \times \Gamma \to P
  \times \Gamma \times \{\textsc{Left},\textsc{Right}\}$. The intended
  meaning of a transition $\delta(p, a) = (q, b, D)$ is that whenever
  $\mathcal{A}$ is in state $p$ and holds letter $a$ at the current
  position of its tape, then $\mathcal{A}$ overwrites $a$ with $b$ and
  moves to state $q$ and to the next tape position in direction $D$.

  Let us fix a word $w \in \Sigma^*$ of length $n$ that we will check
  for membership. We construct a permutation $\Z$-VASS $\V = (d, Q, T)$
  and configurations $r(\vec{u})$ and $r'(\vec{0})$ such that
  $\mathcal{A}$ accepts $w$ if and only if $r(\vec{u}) \step{*}
  r'(\vec{0})$.
  
  We set $d \defeq n \cdot |\Gamma| + 1$ and associate a counter to
  each position of $w$ and each letter of the tape alphabet $\Gamma$,
  plus one additional counter. For readability, we denote these
  counters respectively as $x_{i, a}$ and $y$, where $i \in [n]$ and
  $a \in \Gamma$. The idea is to maintain, for every $i \in [n]$, a
  single non zero counter among $\{x_{i, a} : a \in \Gamma\}$ in order
  to represent the current letter in the $i^\text{th}$ tape cell of
  $\mathcal{A}$. The initial vector is $\vec{u} \in \{0, 1\}^d$ such
  that $\vec{u}(y) = n$ and $\vec{u}(x_{i, a}) = 1$ if and only if
  $w_i = a$ for every $i \in [n]$ and $a \in \Gamma$.  The invariant
  that will be maintained during all runs is $y = \sum_{i,a}x_{i,a}$.

  The control-states of $\V$ are defined as:
  \begin{align*}
    Q &\defeq \{r_{p, i} : p \in P, i \in [n]\} \cup \{r_{a,i} : a \in \Gamma, i \in [n]\} \cup \{r_\text{acc}\}.
  \end{align*}
  The purpose of each state of the form $r_{p, i}$ is to store the
  current state $p$ and head position $i$ of $\mathcal{A}$. States of the
  form $r_{a, i}$ will be part of a gadget testing whether
  $\mathcal{A}$ is simulated faithfully.

  We associate a transition to every triple $(p, a, i) \in P \times
  \Gamma \times [n]$, which denotes a configuration of $\mathcal{A}$:
  the automaton is in state $p$ in position $i$, where letter $a$ is
  stored. Let us fix a transition $\delta(p, a) = (q, b, D)$; and let
  $j \defeq i + 1$ if $D = \textsc{Right}$, and $j \defeq i - 1$ if $D
  = \textsc{Left}$. For every $i \in [n]$, if $j \in [n]$, then we add
  to $T$ the transition
  \begin{align*}
    (r_{p, i}, \mat{A}, \vec{a}, r_{q, j})
  \end{align*}
  where $\mat{A}$ is a permutation matrix that swaps the values of
  $x_{i, a}$ and $x_{i, b}$; and $\vec{a}$ is the vector whose only nonzero
  components are $\vec{a}(x_{i,b}) = 1$ and $\vec{a}(y) = 1$.
  The transition is
  depicted on the left of Figure~\ref{fig:gadget:step} (for $D =
  \textsc{Right}$). Notice that all transitions maintain the invariant $y = \sum_{i,a}x_{i,a}$.
  
  The purpose of the swap is to simulate the transition of
  $\mathcal{A}$, upon reading $a$ in tape cell $i$ and state $p$, by
  moving the contents from $x_{i, a}$ to $x_{i, b}$. Note that this
  transition may be faulty, \ie\ it can simulate reading letter $a$
  even though tape cell $i$ contains another letter. The purpose of
  the vector $\vec{a}$ is to detect such faulty behaviour: if the cell
  $i$ does not contain $a$, then more than one counter among $\{x_{i,
    a} : a \in \Gamma\}$ will be a nonzero counter.

  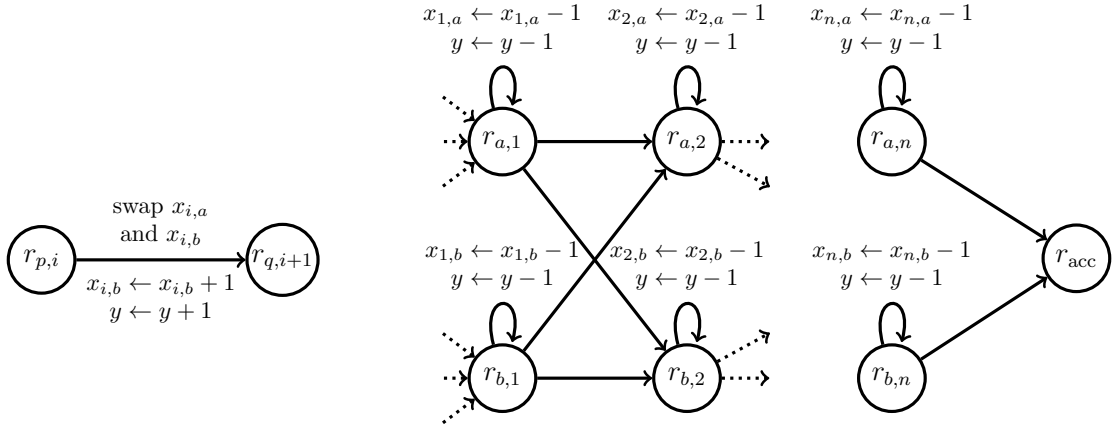
\begin{figure}[h]
    \centering
    \begin{tikzpicture}[->, node distance=2.5cm, auto, very thick, scale=0.9, transform shape, font=\large]
      \tikzset{every state/.style={inner sep=1pt, minimum size=28pt}}
      
      \node[state] (pi) {$r_{p, i}$};
      \node[state] (qj) [right=of pi] {$r_{q, i+1}$};
      
      \node[state] (ra1) [above right = 1cm and 2.5cm of qj] {$r_{a,1}$};
      \node[state, draw=none, minimum size=1pt] (d1) [left=10pt of ra1] {};
      \node[state, draw=none, minimum size=1pt] (d2) [left=10pt of ra1, yshift=-20pt] {};
      \node[state, draw=none, minimum size=1pt] (d3) [left=10pt of ra1, yshift=20pt] {};

      \node[state] (rb1) [below right = 1cm and 2.5cm of qj] {$r_{b,1}$};
      \node[state, draw=none, minimum size=1pt] (e1) [left=10pt of rb1] {};
      \node[state, draw=none, minimum size=1pt] (e2) [left=10pt of rb1, yshift=-20pt] {};
      \node[state, draw=none, minimum size=1pt] (e3) [left=10pt of rb1, yshift=20pt] {};
      
      \node[state] (ra2) [right = 1.7cm of ra1] {$r_{a,2}$};
      \node[state] (rb2) [right = 1.7cm of rb1] {$r_{b,2}$};
      
      \node[state] (ran) [right = 2cm of ra2] {$r_{a,n}$};
      \node[state, draw=none, minimum size=1pt] (f1) [right=20pt of ra2] {};
      \node[state, draw=none, minimum size=1pt] (f2) [right=20pt of ra2, yshift=-20pt] {};

      \node[state] (rbn) [right = 2cm of rb2] {$r_{b,n}$};
      \node[state, draw=none, minimum size=1pt] (g1) [right=20pt of rb2] {};
      \node[state, draw=none, minimum size=1pt] (g2) [right=20pt of rb2, yshift=20pt] {};
      
      \node[state] (racc) [below right = 1cm and 2cm of ran] {$r_\text{acc}$};
      
      \path[->, font=\small]
      (pi) edge [] node[above] {\begin{tabular}{c}swap $x_{i, a}$ \\ and
          $x_{i, b}$\end{tabular}} node[below] {\begin{tabular}{c} $x_{i, b} \leftarrow x_{i,b} + 1$ \\
          $y \leftarrow y + 1$\end{tabular}} (qj)

      (ra1) edge[loop above] node {
        \begin{tabular}{c}
          $x_{1, a} \leftarrow x_{1, a} - 1$ \\
          $y \leftarrow y - 1$
        \end{tabular}
      } (ra1)
      
            (rb1) edge[loop above] node {
        \begin{tabular}{c}
          $x_{1, b} \leftarrow x_{1, b} - 1$ \\
          $y \leftarrow y - 1$
        \end{tabular}
      } (rb1)
      
      (ra2) edge[loop above] node {
        \begin{tabular}{c}
          $x_{2, a} \leftarrow x_{2, a} - 1$ \\
          $y \leftarrow y - 1$
        \end{tabular}
      } (ra2)
      
            (rb2) edge[loop above] node {
        \begin{tabular}{c}
          $x_{2, b} \leftarrow x_{2, b} - 1$ \\
          $y \leftarrow y - 1$
        \end{tabular}
      } (rb2)
      
            (ran) edge[loop above] node {
        \begin{tabular}{c}
          $x_{n, a} \leftarrow x_{n, a} - 1$ \\
          $y \leftarrow y - 1$
        \end{tabular}
      } (ran)
      
            (rbn) edge[loop above] node {
        \begin{tabular}{c}
          $x_{n, b} \leftarrow x_{n, b} - 1$ \\
          $y \leftarrow y - 1$
        \end{tabular}
      } (rbn)
      
      (ra1) edge (ra2)
      (ra1) edge (rb2)
      (rb1) edge (rb2)
      (rb1) edge (ra2)
      (ran) edge (racc)
      (rbn) edge (racc)
      ;

      \path[->, dotted]
      (d1) edge node {} (ra1)
      (d2) edge node {} (ra1)
      (d3) edge node {} (ra1)
      (e1) edge node {} (rb1)
      (e2) edge node {} (rb1)
      (e3) edge node {} (rb1)
      (ra2) edge (f1)
      (ra2) edge (f2)
      (rb2) edge (g1)
      (rb2) edge (g2)
      ;
    \end{tikzpicture}
    \caption{\emph{Left}: transitions of $\V$ simulating transition
      $\delta(p, a) = (q, b, \textsc{Right})$ of
      $\mathcal{A}$. \emph{Right}: gadget verifying whether the
      accepting state has been reached with no error during the
      simulation. For readability, we assume $\Gamma = \{a, b\}$ in
      the right gadget.}\label{fig:gadget:step}
  \end{figure}

  Recall that $y = \sum_{i,a}x_{i,a}$. We conclude that $\mathcal{A}$
  accepts $w$ if and only if there exist $j \in [n]$, $\vec{u}' \in
  \N^d$ and $a_1, a_2, \ldots, a_n \in \Gamma$ such that
  $$r_{q^\text{ini},1}(\vec{u}) \step{*} r_{q^\text{acc},j}(\vec{u}')
  \text{ and } \vec{u}'(y) = \sum_{i \in [d]} \vec{u}'(x_{i,a_i}).$$
  
  To test whether such index $j$, vector $\vec{u}'$ and letters $a_1,
  a_2, \ldots, a_n$ exist, we add some transitions to $T$ as
  illustrated on the right of Figure~\ref{fig:gadget:step}.  For every
  $i \in [n]$ and every $a \in \Gamma$, we add to $T$ the transitions
  $(r_{q^\text{acc},i}, \mat{I}, \vec{0}, r_{a,1})$. For every $i \in
  [n]$ and $a \in \Gamma$, we add to $T$ the transitions $(r_{a,i},
  \mat{I}, \vec{b}, r_{a,i})$ where $\vec{b}$ is the vector whose only
  non zero components are $\vec{b}(x_{i, a}) = \vec{b}(y) =
  -1$. Moreover, if $i < n$, then for every $a,b \in \Gamma$ we also
  add transitions $(r_{a,i}, \mat{I}, \vec{0}, r_{b,i+1})$. Finally,
  for all $a \in \Gamma$, we also add transitions $(r_{a,n}, \mat{I},
  \vec{0}, r_\text{acc})$. The purpose of these transitions is to
  guess for each $i$ some letter $a_i$ and simultaneously decrease
  $x_{i,a_i}$ and $y$. We do this for each $i$ starting from $1$ to
  $n$ and in the end we move to the state $r_\text{acc}$.  We conclude
  that $\mathcal{A}$ accepts $w$ if and only if $r_{q^\text{ini},
    1}(\vec{u}) \step{*} r_\text{acc}(\vec{0})$ in $\V$.
\end{proof}

\begin{cor}
  The reachability problem is PSPACE-complete for permutation
  $\Z$-VASS, transfer $\Z$-VASS and copy $\Z$-VASS.
\end{cor}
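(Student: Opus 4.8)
The plan is to prove PSPACE membership and PSPACE-hardness separately, since PSPACE-completeness is precisely the conjunction of the two. For membership, no new work is needed: the corollary following Theorem~\ref{thm:pspace:upper} already establishes that reachability for every nonnegative afmp-$\Z$-VASS lies in PSPACE, and it explicitly lists permutation, transfer, and copy $\Z$-VASS among the covered classes, all three being nonnegative. This settles the upper bound for all three classes at once.

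For hardness, Theorem~\ref{thm:pspace:hardness} directly yields PSPACE-hardness for permutation $\Z$-VASS. To transfer this bound to the other two classes, I would observe that every permutation matrix is simultaneously a transfer matrix and a copy matrix: by definition a permutation matrix has exactly one $1$ in each row \emph{and} each column, which in particular means it has exactly one $1$ in each column (the transfer condition) and exactly one $1$ in each row (the copy condition). Consequently, every permutation $\Z$-VASS is at the same time a transfer $\Z$-VASS and a copy $\Z$-VASS, so the instances produced by the reduction in Theorem~\ref{thm:pspace:hardness} are simultaneously valid reachability instances for all three classes. This immediately propagates PSPACE-hardness to transfer $\Z$-VASS and copy $\Z$-VASS.

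Combining the matching upper and lower bounds yields PSPACE-completeness for permutation, transfer, and copy $\Z$-VASS. I do not expect any genuine obstacle: the whole argument reduces to the syntactic inclusion of the permutation matrices within both the transfer and the copy classes, which is immediate from the definitions, together with the two theorems already established. The only point meriting a line of care is checking that this inclusion lifts from individual generating matrices to the generated monoids, but since products of permutation matrices remain permutation matrices (and likewise for transfers and copies), the class memberships $\monoid[\V] \subseteq \class$ are preserved without further argument.
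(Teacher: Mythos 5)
Your proposal is correct and follows essentially the same route as the paper: PSPACE-hardness from Theorem~\ref{thm:pspace:hardness}, the upper bound from Theorem~\ref{thm:pspace:upper} via nonnegativity, and the observation that every permutation matrix is simultaneously a transfer and a copy matrix. Your extra remark that the inclusion lifts to the generated monoids (products of permutation matrices remain permutations) is a sound, if implicit in the paper, point of care.
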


\begin{proof}
  PSPACE-hardness for permutation $\Z$-VASS was shown in
  Theorem~\ref{thm:pspace:hardness}, and the upper bound for transfer
  $\Z$-VASS and copy $\Z$-VASS follows from
  Theorem~\ref{thm:pspace:upper}. It remains to observe that
  permutation matrices are also transfer and copy matrices.
\end{proof}

\begin{prop}\label{proposition:undecidable}
  The reachability problem for transfer + copy $\Z$-VASS is
  undecidable, even when restricted to three counters.
\end{prop}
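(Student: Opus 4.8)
The plan is to reduce from the halting problem of two-counter Minsky machines, which is undecidable, exploiting the fact---already visible in Figure~\ref{fig:example}---that transfer and copy matrices can multiply a counter by a constant. I would encode a Minsky machine with counters $c_1,c_2$ by storing the single G\"odel number $g=2^{c_1}3^{c_2}$ in one of the three available counters, using the remaining two as scratch space; the line number and the current micro-step of the machine are tracked by the control-states of the transfer + copy $\Z$-VASS $\V$.

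For the increments I would realize $g\mapsto 2g$ by a copy matrix duplicating $g$ into a scratch counter, i.e.\ $(g,0)\mapsto(g,g)$, followed by a transfer matrix summing the scratch counter back onto $g$, i.e.\ $(g,g)\mapsto(2g,0)$; this is exactly the product computed after Figure~\ref{fig:example}. Multiplication by $3$ is obtained analogously by first copying $g$ into both scratch counters and then transferring them back. Decrements and zero-tests correspond to dividing $g$ by $2$ or $3$ and to testing divisibility. Since transitions of a $\Z$-VASS are non-blocking and counters range over $\Z$, these inverse operations cannot be performed directly; instead I would use the standard guess-and-verify technique: a gadget nondeterministically guesses the quotient $q$ in a scratch counter, re-multiplies it by the constant using the gadget above, and the equality $2q=g$ (resp.\ $3q=g$) is enforced not locally but through the global exact-reachability requirement.

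The reduction is then completed by declaring that the Minsky machine halts if and only if $\V$ can reach, from the configuration encoding the initial machine state with $g=1$ (that is, $c_1=c_2=0$) and scratch counters $\vec{0}$, the control-state corresponding to the halting line with all three counters back at their designated final values. Completeness is immediate: a halting computation of the Minsky machine is simulated step by step by guessing the correct quotients. Soundness is where the main difficulty lies: because there is no way to block a transition, one must argue that the exact target value of $g$, together with the fixed sequence of multiplications and guessed divisions, forces every guessed quotient to be correct, so that any run reaching the target corresponds to a faithful simulation.

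Accordingly, the main obstacle is the faithful simulation of decrements and zero-tests under the non-blocking $\Z$ semantics using only three counters: correctness can be certified only globally, through exact reachability of the final configuration, rather than locally by forbidding illegal moves. The forward direction (increments) is essentially free thanks to the multiplication-by-constant capability of transfer + copy matrices, so the whole argument rests on designing the division and verification gadget so that it fits within the two scratch counters and so that a wrong guess can never be repaired into a run that reaches the exact target.
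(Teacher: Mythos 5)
There is a genuine gap, and it sits exactly where you locate it yourself: the soundness of the guess-and-verify gadget for divisions and zero-tests is asserted as a design goal but never established, and under the non-blocking $\Z$ semantics with only three counters the natural implementations of it fail. Concretely: (i)~to verify $2q = g$ you must subtract a \emph{counter-valued} quantity, which with nonnegative transfer/copy matrices can only be done by a nondeterministic simultaneous-decrement loop whose exactness itself requires a zero-test --- the very operation you are trying to implement, so the construction is circular; (ii)~the residue of a wrong guess must survive until the final configuration to be caught by the exact-reachability check, but all three counters are reused at every one of the unboundedly many simulation steps, and a copy matrix \emph{overwrites} its target counter, so a residue parked in a scratch counter is simply erased by the next copy operation rather than propagated; (iii)~if instead you fold residues into $g$ itself, errors from different steps can cancel (a quotient guessed too small at one step, too large at another), so reaching the exact target value no longer certifies a faithful simulation. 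None of these is a routine detail; fixing them (e.g.\ by amplifying errors so they cannot cancel) is the actual mathematical content of the reduction, and your sketch does not supply it.

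The paper avoids this difficulty altogether by choosing a source problem that needs no inverse operations. It reduces from the Post correspondence problem via Reichert's construction~\cite{Rei15}: the two candidate sequences are built up as binary numbers in two counters using only the doubling matrices $\mat{D_1} = \bigl(\begin{smallmatrix}2 & 0\\ 0 & 1\end{smallmatrix}\bigr)$ and $\mat{D_2} = \bigl(\begin{smallmatrix}1 & 0\\ 0 & 2\end{smallmatrix}\bigr)$ together with additive updates, all of which are \emph{forward} computations, and correctness of the entire run is certified by a single final equality checked through exact reachability --- no divisions and no intermediate zero-tests ever occur. The only new ingredient is then a small gadget (Figure~\ref{fig:doubling:gadget}) simulating each doubling matrix by one copy matrix and one transfer matrix with the help of a third counter, precisely the multiplication capability you correctly identified from Figure~\ref{fig:example}. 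If you want to salvage a Minsky-machine route, you would have to either prove an error-persistence invariant that your gadgets do not currently have, or restructure the encoding so that, as in the PCP reduction, the whole computation is monotone and verified once at the end.
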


\begin{proof}
  Reichert~\cite{Rei15} gives a reduction from the Post correspondence
  problem over the alphabet $\{0, 1\}$ to reachability in affine
  $\Z$-VASS with two counters.
  \begin{figure}[h]
    \centering
    \begin{tikzpicture}[->, node distance=3cm, auto, very thick, scale=0.9, transform shape, font=\large]
      \tikzset{every state/.style={inner sep=1pt, minimum size=25pt}}

      \node[state] (pa) {$p$};
      \node[state] (qa) [right of=pa] {$q$};

      \node[state] (pb) [right=1.5cm of qa] {$p$};
      \node[state] (r) [right=of pb] {};
      \node[state] (qb) [right=of r] {$q$};

      \path[->]
      (pa) edge node {$\mat{D_1}, \begin{pmatrix}b_1 \\ b_2\end{pmatrix}$} (qa)

      (pb) edge node {$
        \begin{pmatrix}1 & 0 & 0 \\ 0 & 1 & 0 \\ 1 & 0 & 0\end{pmatrix}
        $, $\vec{0}$} (r)
      (r) edge node {$
        \begin{pmatrix}1 & 0 & 1 \\ 0 & 1 & 0 \\ 0 & 0 & 0\end{pmatrix}
        $, $\begin{pmatrix}b_1 \\ b_2 \\ 0\end{pmatrix}$} (qb)
      ;
    \end{tikzpicture}
    \caption{Gadget (on the right) made of copy and transfer
      transitions simulating the doubling transition on the
      left.}\label{fig:doubling:gadget}
  \end{figure}
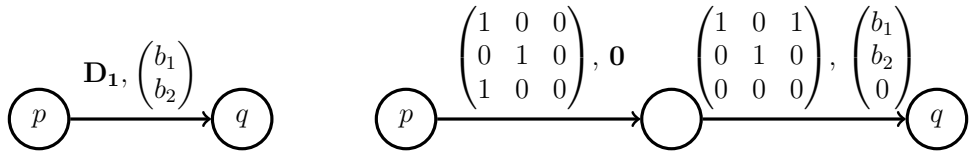
  The trick of the reduction is to represent two binary sequences as
  the natural numbers the sequences encode, one in each counter. If we
  add an artificial $1$ at the beginning of the two binary sequences,
  then these sequences are uniquely determined by their numerical
  values. We only need to be able to double the counter values, which
  corresponds to shifting the sequences. This can be achieved using the
  following matrices:
  \begin{align*}
    \mat{D_1} &\defeq \begin{pmatrix}2 & 0 \\ 0 & 1\end{pmatrix} \text{ and }
    \mat{D_2}  \defeq \begin{pmatrix}1 & 0 \\ 0 & 2\end{pmatrix}.
  \end{align*}  

  The only matrices used in the construction of Reichert are
  $\mat{I}$, $\mat{D_1}$ and $\mat{D_2}$. The last two matrices can be
  simulated by a gadget made of copy and transfer matrices and by
  introducing a third counter. This gadget is depicted in
  Figure~\ref{fig:doubling:gadget} for the case of matrix
  $\mat{D_1}$. The other gadget is symmetric. Note that if a
  run enters control-state $p$ of the gadget with vector $(x, y,
  0)$, then it leaves control-state $q$ in vector $(2x + b_1, y + b_2,
  0)$ as required.
\end{proof}

\begin{rem}
  The coverability problem for nonnegative affine VASS is known to be
  decidable in Ackermann time~\cite{FFSS11}. Recall that coverability
  and reachability are inter-reducible for affine $\Z$-VASS. Thus,
  Proposition~\ref{proposition:undecidable} gives an example of a
  decision problem, namely coverability, which is more difficult for
  affine $\Z$-VASS than for affine VASS.
\end{rem}

\section{Reachability beyond finite monoids}
\label{sec:inf:monoid}

Thus far, we have shown, on the one hand, that reachability is decidable
for affine $\Z$-VASS with the finite-monoid property, and, on the other
hand, that reachability is undecidable for arbitrary affine
$\Z$-VASS. This raises the question of whether there is a decidability
dichotomy between classes of finite and infinite monoids, \ie\ whether
reachability is undecidable for \emph{every} class of infinite
monoids. In this section, we show that this is not the case: we
exhibit a non-trivial class of infinite monoids for which affine $\Z$-VASS
reachability is \emph{decidable}. In other words, the top rectangular
region of Figure~\ref{fig:classification} is \emph{not} equal to the
red ellipse, which answers a question we left open
in~\cite{BlondinHM18}. The class of affine $\Z$-VASS will have a
particular shape, namely, the matrix monoids have a single
generator. More formally, we say that a class of matrices $C =
\bigcup_{d \geq 1} \class_d$ is \emph{monogenic} if each monoid
$\class_d$ is generated by a single matrix. In the second part of this
section we prove that reachability is in general undecidable for
monogenic classes.

\subsection{Decidability for a class of affine $\Z$-VASS with infinite monoids}\label{sec:inf:monoids}

Let $\class_d$ be the monoid generated by the (nonnegative) matrix
$\matone_d \in \N^{d \times d}$ whose entries are all equal to
$1$. Clearly, $\class_d$ is infinite for every $d \ge 2$ since
$(\matone_d)^n$ is the matrix whose entries are all equal to
$d^n$. Let $\classone = \bigcup_{d \geq 1} \class_d$. The rest of this
section is devoted to proving the following theorem:

\begin{thm}\label{thm:infinite_monoid}
  The reachability problem $\reach[\classone]$ is decidable.
\end{thm}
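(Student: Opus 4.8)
The plan is to exploit the very rigid algebraic structure of the generator $\matone_d$. First I would record the key identities: for $n \ge 1$ we have $\matone_d^{\,n} = d^{\,n-1}\matone_d$, and for every $\vec{x} \in \Z^d$, $\matone_d \cdot \vec{x} = \sigma(\vec{x})\,\vec{1}$, where $\sigma(\vec{x}) \defeq \sum_{i \in [d]} \vec{x}(i)$ and $\vec{1}$ is the all-ones vector. Thus $\class_d = \{\mat{I}\} \cup \{d^{\,m}\matone_d : m \ge 0\}$, and, crucially, \emph{any} application of a non-identity transition matrix ``collapses'' the current configuration into a constant vector $c\,\vec{1}$ shifted by the transition's additive vector. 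After the first such collapse the configuration is determined, up to a known additive vector, by a single scalar, and every subsequent non-identity matrix reads only the coordinate sum of its argument.

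Given a run, I would decompose it at the transitions carrying a non-identity matrix, writing $\pi = \pi_0\, t_1\, \pi_1\, t_2 \cdots t_m\, \pi_m$ where each $t_j$ is a ``collapsing'' transition and each $\pi_j$ uses only the identity matrix. Each segment $\pi_j$ is an ordinary $\Z$-VASS computation, so its control-flow feasibility together with its net additive effect is captured by an effectively computable existential Presburger formula (equivalently, by the semilinearity of $\Z$-VASS reachability used throughout the paper). Runs with $m = 0$ are plain $\Z$-VASS runs and are handled directly. For $m \ge 1$, the prefix $\pi_0 t_1$ and the suffix $\pi_m$ contribute $\Z$-VASS/Presburger constraints on $\vec{u}$ and $\vec{v}$, while the dynamics strictly between the first and last collapse is governed entirely by the scalar $c$, which transforms as $c \mapsto d^{\,n_j} c + s_j$ with $n_j \ge 1$ the (bounded) exponent of $t_j$ and $s_j$ ranging over an effectively computable semilinear subset of $\Z$ (obtained by applying $\sigma$ to the additive effects of $t_j$ and of the adjacent identity segment and scaling by $d^{\,n_j-1}$). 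This reduces $\reach[\classone]$ to reachability in a \emph{one-dimensional} affine $\Z$-VASS whose single generating matrix is multiplication by the scalar $d$, with additive effects drawn from semilinear sets; note that this is a genuinely different monogenic system, so there is no circularity.

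It therefore remains to decide reachability for this one-dimensional system, and this is the heart of the argument. Here I would appeal to base-$d$ representations: the operations $x \mapsto d\,x$ and $x \mapsto x + b$ act on base-$d$ encodings as a shift and as addition of a constant, both of which preserve $d$-recognizability, the aim being to show that the set of scalars reachable from a \emph{fixed} source is effectively $d$-recognizable and hence definable in B\"uchi arithmetic $\langle \Z, +, V_d\rangle$ (with $V_d$ sending an integer to the largest power of $d$ dividing it), whose first-order theory is decidable; membership of the target scalar then reduces to evaluating a sentence of this theory. Concretely I would separate the one-dimensional control graph into additive cycles (multiplier $1$), which contribute linearly and keep the problem within Presburger arithmetic, and multiplying cycles (multiplier a power of $d$, hence $\ge 2$ since $d \ge 2$): such a cycle iterated $e$ times acts as $x \mapsto d^{\,Me} x + c\,\tfrac{d^{\,Me}-1}{d^{\,M}-1}$, an expression whose base-$d$ shape is tightly constrained by a fixed target, so that the admissible iteration counts can be described $d$-recognizably and the reachability set assembled from finitely many such pieces.

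The main obstacle is precisely this one-dimensional core. Because loops may compose arbitrarily many copies of the multiplier $d$, the exponent is unbounded and the reachability \emph{relation} $\{(x_0, x_1)\}$ is not recognizable (it would amount to multiplication by an unbounded power of $d$); the argument must crucially use that the \emph{source} configuration is fixed. Worse, additive cycles can decrease the value and thereby cancel the geometric growth produced by multiplying cycles, which rules out any naive bound on the number of multiplications in terms of $\norm{\vec{u}}$ and $\norm{\vec{v}}$. The technical crux is thus to show, despite this cancellation, that the reachability fixpoint of these shift-and-add operations from a fixed configuration stabilizes into a finite-automaton-describable ($d$-recognizable) set. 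This is the step requiring the most care and where the special single-generator, all-ones structure is indispensable --- and where, consistently with the paper's undecidability result, the approach must fail for general monogenic classes.
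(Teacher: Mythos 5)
Your reduction phase is essentially the paper's own argument, just organized slightly differently: you split a run at \emph{every} collapsing transition, while the paper splits only at the \emph{last} $\matone$-transition (Proposition~\ref{prop:two_parts}) and observes that the coordinate-sum map $\delta(\vec{x}) = \sum_i \vec{x}(i)$ turns the entire $d$-dimensional dynamics into a one-counter system with updates $x \mapsto x + \delta(\tvec[t])$ and $x \mapsto d \cdot x$ (Lemma~\ref{lemma:V_to_W}); the semilinear suffix constraint (your Presburger constraint on $\vec{v}$, the paper's set $S$) is then absorbed by a small gadget built from the ultimately periodic description $S = F \cup B + a\cdot\N$. These two decompositions are interchangeable, and your identities $\matone^n = d^{n-1}\matone$ and $\matone\vec{x} = \sigma(\vec{x})\vec{1}$ are exactly what drives the paper's construction. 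So up to the point where you arrive at a one-dimensional affine system over $\Z$ with multiplier $d$, your proof is sound.

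The genuine gap is the step you yourself flag as the crux: you never actually decide the one-dimensional system. You propose to show that the reachability set from a fixed source is effectively $d$-recognizable and then evaluate membership in B\"uchi arithmetic, but you give no construction and correctly identify the obstruction --- decreasing additive cycles can cancel the geometric growth of multiplying cycles, so the iteration counts of different cycle types are coupled and no naive automaton or bound on the number of multiplications is available. Since the whole difficulty of Theorem~\ref{thm:infinite_monoid} is concentrated in this residual problem, a proof that ends with ``this is the step requiring the most care'' has not proved the theorem; it is also far from clear that your recognizability claim admits a short self-contained proof, precisely because of the cancellation phenomenon. The paper sidesteps this entirely: the one-dimensional system (an affine one-counter $\Z$-net, with $+c$ and $\cdot d$ updates) is a special case of one-counter register machines with polynomial updates, whose reachability problem is decidable in PSPACE by Finkel, G\"oller and Haase~\cite{FinkelGH13}. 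The fix for your argument is therefore not to develop the $d$-recognizability theory but to cite that result, after implementing your semilinear additive effects $s_j$ by explicit states and loops (as the paper does in Figure~\ref{fig:transformation}), since that model only allows constant, not semilinear, updates per transition.
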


Let $\V = (d, Q, T)$ be an affine $\Z$-VASS belonging to
$\classone$. We will simply write $\matone$ instead of $\matone_d$ as
$d$ is implicit from the dimension of $\V$. Observe that we can assume
w.l.o.g.\ that for every transition $(p, \mat{A}, \vec{b}, q) \in T$
either $\mat{A} = \mat{I}$ or $\vec{b} = \vec{0}$, \ie\ each
transition either performs a transformation of the form $\vec{x}
\leftarrow \vec{x} + \vec{b}$ or $\vec{x} \leftarrow \mat{A} \cdot
\vec{x}$. Indeed, by adding an extra state $r$, we can always split
such a transition into two transitions $(p, \mat{A}, \vec{0}, r)$ and
$(r, \mat{I}, \vec{b}, q)$. We can further assume w.l.o.g.\ that
$\mat{I}$ and $\matone$ are the only matrices occurring in
$\V$. Indeed, if $T$ contains a transition $t = (p, \mat{A}, \vec{b},
q)$ where $\mat{A} \not\in \{\mat{I}, \matone\}$, then $\mat{A} =
\matone^n$ for some $n \geq 2$ and $\vec{b} = \vec{0}$. Thus, we can
simply replace $t$ by a sequence of transitions $t_1, t_2, \ldots,
t_n$ leading from $p$ to $q$ and such that $\tmat[t_i] = \matone$ and
$\tvec[t_i] = \vec{0}$ for every $i \in [n]$.

Let $\Tid$ and $\Tone$ denote the (maximal) subsets of $T$ of
transitions with matrix $\mat{I}$ and $\matone$ respectively. Note
that $\Tid$ and $\Tone$ form a partition of $T$. We
will write $\step{S}$ and $\step{S^*}$ to denote respectively the
restriction of $\step{}$ and $\step{*}$ to transitions of a set
$S$. We give a simple characterization of reachability in $\V$:

\begin{prop}\label{prop:two_parts}
  For all configurations $p(\vec{u})$ and $q(\vec{v})$ of $\V$,
  $p(\vec{u}) \step{*} q(\vec{v})$ if and only if:
  \begin{enumerate}
  \item\label{c:1} $p(\vec{u}) \step{T_{\mat{I}}^*} q(\vec{v})$; or

  \item\label{c:2} $p(\vec{u}) \step{*} r'(\vec{w}) \step{\Tone}
    r(\matone \cdot \vec{w}) \step{\Tid^*} q(\vec{v})$ for some $r,
    r' \in Q$ and $\vec{w} \in \Z^d$.
  \end{enumerate}
\end{prop}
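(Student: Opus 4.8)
The plan is to prove both directions of the equivalence in Proposition~\ref{prop:two_parts}. The backward direction is immediate: both conditions~\eqref{c:1} and~\eqref{c:2} describe specific kinds of runs from $p(\vec{u})$ to $q(\vec{v})$ (in case~\eqref{c:1}, a run using only $\Tid$-transitions; in case~\eqref{c:2}, a run that reaches $r'(\vec{w})$, applies one $\Tone$-transition, and then finishes with $\Tid$-transitions), so in either case $p(\vec{u}) \step{*} q(\vec{v})$ holds by definition. The substance of the proposition is therefore the forward direction.

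For the forward direction, suppose $p(\vec{u}) \step{w} q(\vec{v})$ for some run $w$. The key observation is to look at whether $w$ uses any $\Tone$-transition. If $w \in \Tid^*$, then we are immediately in case~\eqref{c:1}. Otherwise, let $t$ be the \emph{last} $\Tone$-transition occurring in $w$, so that we can factor $w = w' \, t \, w''$ where $w'' \in \Tid^*$. The plan is then to match this factorization to case~\eqref{c:2}: writing $r' \defeq \src{t}$ and $r \defeq \tgt{t}$, and letting $\vec{w}$ be the counter value reached just before firing $t$, we have $p(\vec{u}) \step{w'} r'(\vec{w})$, then $r'(\vec{w}) \step{t} r(\matone \cdot \vec{w})$ since $t$ has matrix $\matone$ and (by the normalization at the start of the section) zero additive vector, and finally $r(\matone \cdot \vec{w}) \step{w''} q(\vec{v})$ with $w'' \in \Tid^*$. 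This is exactly the shape required by~\eqref{c:2}.

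The main subtlety to handle carefully is the normalization assumption established just before the proposition: every transition either has matrix $\mat{I}$ (an additive update $\vec{x} \leftarrow \vec{x} + \vec{b}$) or has matrix $\matone$ with $\vec{b} = \vec{0}$ (a pure multiplication $\vec{x} \leftarrow \matone \cdot \vec{x}$). This is what guarantees that firing the chosen $\Tone$-transition $t$ from $r'(\vec{w})$ lands exactly in $r(\matone \cdot \vec{w})$ with no additive contribution, so that the intermediate configuration appearing in~\eqref{c:2} is precisely $r(\matone \cdot \vec{w})$ rather than $r(\matone \cdot \vec{w} + \vec{b})$. I would state this explicitly when invoking the factorization. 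I do not anticipate a genuine obstacle here; the whole argument is a straightforward case split on the run followed by isolating its final multiplicative step, and the earlier normalization does all the real work of ensuring the clean form of the two cases.
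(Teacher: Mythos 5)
Your proof is correct and matches the paper's own argument essentially verbatim: the backward direction is immediate, and the forward direction isolates the \emph{last} $\Tone$-transition $t$ of the run, using the normalization $\tmat[t] = \matone$, $\tvec[t] = \vec{0}$ to obtain the factorization of case~\eqref{c:2}. Your explicit remark about why the intermediate configuration is exactly $r(\matone \cdot \vec{w})$ is precisely the point the paper's proof also rests on.
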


\begin{proof}
  $\Leftarrow$) Immediate.

  $\Rightarrow$) Assume that $p(\vec{u}) \step{w} q(\vec{v})$ for some
  $w \in T^*$. If $w$ does not contain any transition from $\Tone$,
  then~\eqref{c:1} holds and we are done. Thus, suppose that $w$
  contains at least one transition from $\Tone$. Let $t \in \Tone$ be
  the last such transition occurring in $w$. Recall that, by
  assumption, $\tmat[t] = \matone$ and $\tvec[t] =
  \vec{0}$. Therefore, we are done since there exist $r, r' \in Q$ and
  $\vec{w} \in \Z^d$ such that
  \begin{align*}
    p(\vec{u}) \step{*} r'(\vec{w}) \step{t} r(\matone \cdot \vec{w})
    \step{\Tid^*} q(\vec{v}). &\qedhere
  \end{align*}
\end{proof}

In order to prove that $\reach[\classone]$ is decidable, it suffices
to show that there exist procedures to decide the two conditions of
Proposition~\ref{prop:two_parts}. Testing condition~\eqref{c:1}
amounts to $\Z$-VASS reachability, which belongs to
NP~\cite{HH14}. Indeed, any run restricted to $\Tid$ is a run of the
$\Z$-VASS induced by $\Tid$. Thus, in the rest of the proof, we focus
on showing how to test condition~\eqref{c:2}.

For this purpose, let us introduce an auxiliary model. An \emph{affine
  one-counter $\Z$-net} is a pair $(P, U)$ where
\begin{itemize}
\item $P$ is a finite set of \emph{states}, and

\item $U \subseteq Q \times \{+, \cdot\} \times \Z \times Q$ is a
  finite set of \emph{transitions}.
\end{itemize}

Furthermore, for every transition $t = (p, \circledast, c, q)$, we
write $p(n) \step{t} q(m)$ if $m = n\circledast c$. The notions of
runs and reachability are defined accordingly as for affine
$\Z$-VASS. These machines are a special case of one-counter register
machines with polynomial updates whose reachability problem belongs to
PSPACE~\cite{FinkelGH13}, \ie\ we only allow the counter to be
multiplied or incremented by constants, whereas the model
of~\cite{FinkelGH13} allows to update the counter by a polynomial such
as $x^2$ or $x^3 - x + 1$.

For every $\vec{v} \in \Z^d$, let $$\delta(\vec{v}) \defeq
\sum_{i=1}^d \vec{v}(i).$$ 

Consider the transitions $T$ in the affine $\Z$-VASS $\V$. For every transition $t \in T$, let $\overline{t}$ be defined as:
$$
\overline{t} \defeq
\begin{cases}
  (p, +, \delta(\tvec[t]), q) & \text{if}\ t \in \Tid, \\
  (p, \cdot, d, q)            & \text{if}\ t \in \Tone,
\end{cases}
$$ where $p = \src{t}$ and $q = \tgt{t}$.

Let $\W = (Q, \overline{T})$ be the affine one-counter $\Z$-net
obtained from $\V = (d, Q, T)$ by keeping the same states and taking
$\overline{T} \defeq \{\overline{t} : t \in T\}$. We write
$\overline{w} \in \overline{T}^*$ to denote the (unique) sequence of transitions in
$\W$ corresponding to the sequence $w \in T^*$ of $\V$. Let us observe
the following correspondence between $\V$ and $\W$:

\begin{lem}\label{lemma:V_to_W}
  For every $p, q \in Q$, $\vec{u} \in \Z^d$, $m \in \Z$ and $w \in
  T^*$, we have $p(\delta(\vec{u})) \step{\overline{w}} q(m)$ in $\W$
  if and only if $p(\vec{u}) \step{w} q(\vec{v})$ in $\V$ for some
  $\vec{v} \in \Z^d$ such that $\delta(\vec{v}) = m$.
\end{lem}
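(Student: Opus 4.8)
The plan is to prove the biconditional by induction on the length $k$ of $w$, exploiting the fact that the component-sum map $\delta$ is compatible with the single-step transitions of $\V$ in exactly the way the updates of $\W$ are designed to capture. Observe first that since $\overline{t}$ has the same source and target states as $t$ for every $t \in T$, a sequence $\overline{w}$ is a path from $p$ to $q$ in $\W$ if and only if $w$ is a path from $p$ to $q$ in $\V$. Moreover both $\V$ and $\W$ are non-blocking, so along a fixed path the reached value is uniquely determined. Hence the lemma reduces to the claim that $\delta$ sends the $\V$-effect of $w$ on $\vec{u}$ to the $\W$-effect of $\overline{w}$ on $\delta(\vec{u})$.

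First I would establish single-step compatibility by inspecting the two kinds of transitions. For $t \in \Tid$ we have $f_t(\vec{u}) = \vec{u} + \tvec[t]$, so by linearity of $\delta$ we get $\delta(f_t(\vec{u})) = \delta(\vec{u}) + \delta(\tvec[t])$, which is precisely the increment performed by $\overline{t} = (\src{t}, +, \delta(\tvec[t]), \tgt{t})$. For $t \in \Tone$ we have $f_t(\vec{u}) = \matone \cdot \vec{u}$, and since every entry of $\matone \cdot \vec{u}$ equals $\sum_{i=1}^d \vec{u}(i) = \delta(\vec{u})$, we obtain $\delta(\matone \cdot \vec{u}) = d \cdot \delta(\vec{u})$, matching the multiplication by $d$ performed by $\overline{t} = (\src{t}, \cdot, d, \tgt{t})$. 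In both cases, whenever $p(\vec{u}) \step{t} q(\vec{v})$ holds in $\V$, we get $p(\delta(\vec{u})) \step{\overline{t}} q(\delta(\vec{v}))$ in $\W$, and conversely the unique value $m$ reached by $\overline{t}$ from $p(\delta(\vec{u}))$ equals $\delta(\vec{v})$.

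Then I would run the induction. The base case $w = \varepsilon$ is immediate: both sides force $p = q$ and leave the counter unchanged, so $m = \delta(\vec{u}) = \delta(\vec{v})$ with $\vec{v} = \vec{u}$. For the inductive step, write $w = w' t$ and split both runs at the penultimate configuration, so that $p(\vec{u}) \step{w'} r(\vec{x}) \step{t} q(\vec{v})$ in $\V$ corresponds to $p(\delta(\vec{u})) \step{\overline{w'}} r(\delta(\vec{x})) \step{\overline{t}} q(m)$ in $\W$. The induction hypothesis handles the prefix $\overline{w'}$, yielding $p(\delta(\vec{u})) \step{\overline{w'}} r(\delta(\vec{x}))$, and the single-step compatibility above handles the final transition $t$, giving $m = \delta(\vec{v})$. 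Reading this argument in each direction delivers the two implications of the biconditional.

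I expect no serious obstacle: the whole content resides in the two algebraic identities $\delta(\vec{u} + \tvec[t]) = \delta(\vec{u}) + \delta(\tvec[t])$ and $\delta(\matone \cdot \vec{u}) = d \cdot \delta(\vec{u})$, of which only the second is nontrivial, and it follows directly from the observation that multiplying by the all-ones matrix replaces every coordinate by the coordinate sum. Everything else is the routine bookkeeping of lifting a single-step correspondence to sequences by induction.
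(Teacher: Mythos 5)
Your proof is correct and follows exactly the route the paper intends: the paper's proof of Lemma~\ref{lemma:V_to_W} is simply ``a simple induction on $|w|$,'' and your argument spells out precisely that induction, with the single-step compatibility resting on $\delta(\vec{u} + \tvec[t]) = \delta(\vec{u}) + \delta(\tvec[t])$ for $t \in \Tid$ and $\delta(\matone \cdot \vec{u}) = d \cdot \delta(\vec{u})$ for $t \in \Tone$ (the latter using the normal form $\tvec[t] = \vec{0}$ assumed earlier in the section). Your added care about non-blocking semantics and the coincidence of underlying paths correctly justifies both directions of the biconditional.
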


\begin{proof}
  The claim follows from a simple induction on $|w|$.
\end{proof}

We may now prove Theorem~\ref{thm:infinite_monoid}.

\begin{proof}[Proof of Theorem~\ref{thm:infinite_monoid}]
Recall that it
suffices to show how to decide condition~\eqref{c:2} of
Proposition~\ref{prop:two_parts}. By definition of $\matone$, this
condition is equivalent to determining whether there exist $r \in Q$
and $n \in \Z$ such that
\begin{align*}
  p(\vec{u}) \step{T^* \cdot \Tone} r(n, n, \ldots, n) \step{\Tid^*}
  q(\vec{v}).\label{eq:overall:cond}
\end{align*}

Let $S = \{m \in \Z : \exists n \in \Z\ (m = d \cdot n) \land
\bigvee_{r \in Q} r(n, n, \ldots, n) \step{\Tid^*} q(\vec{v})\}$. As
we mentioned earlier, $\V$ can be seen as a standard $\Z$-VASS when
restricted to $\Tid$. Since the reachability relation of any $\Z$-VASS
is effectively semilinear~\cite{HH14}, the set $S$ is also effectively
semilinear.

\begin{figure}[!h]
  \definecolor{colnew}{RGB}{25, 158, 213}
  \centering
  \begin{tikzpicture}[->, node distance=1.25cm, auto, very thick, scale=0.9, transform shape, font=\large]
    \tikzset{every state/.style={inner sep=1pt, minimum size=25pt}}
 
    \node[state] (p) {$p$};
    \node[]      (d1) [below=0.25cm of p]  {$\vdots$};
    \node[state] (i1) [below=0.25cm of d1] {};
    \node[]      (d2) [below=0.25cm of i1] {$\cdots$};
    \node[state] (i2) [below=0.25cm of d2] {};
    \node[state] (r)  [left= of i2]        {$r$};

    \node[state, fill=colnew] (a)  [right= of i2]      {};
    \node[state, fill=colnew] (b)  [below=4cm of a] {};
    \node[state, fill=colnew] (c)  [right=4cm of a] {$r'$};

    \path[->, font=\large]
    (i1) edge[bend right] node[swap, xshift=5pt] {$\overline{\Tone}$} (r)
    (i2) edge[]           node[swap] {$\overline{\Tone}$} (r)
    ;

    \path[->, font=\large, colnew, text=black]
    (i1) edge[bend left]  node[xshift=-5pt] {$\cdot d$} (a)
    (i2) edge[]           node[] {$\cdot d$} (a)

    (a) edge[out=45,  in=135] node[] {$-f_1$} (c)
    (a) edge[out=15,  in=165] node[] {$-f_2$} (c)
    (a) edge[out=-45, in=-135] node[] {$-f_k$} (c)

    (a) edge[out=-135,  in=135] node[swap, xshift=2pt] {$-b_1$} (b)
    (a) edge[out=-105,  in=105] node[swap, xshift=2pt] {$-b_2$} (b)
    (a) edge[out=-45, in=45] node[swap, xshift=2pt] {$-b_\ell$} (b)

    (b) edge[bend right] node[swap] {$+0$} (c)
    (b) edge[loop left] node[] {$-a$} ()
    ;

    \path[->, dotted, colnew]
    (a) edge[out=-15, in=-165] node[] {} (c)
    (a) edge[out=-75, in=75]   node[] {} (b)
    ;
    
  \end{tikzpicture}
  \caption{Affine one-counter $\Z$-net $\W$ extended with a gadget
    subtracting a number of $S$ from some transition of
    $\overline{\Tone}$ leading to $r$. The gadget is depicted in
    colour. Transitions connecting $\W$ to the gadget are labeled with
    ``$\cdot d$'' as this is the effect of every transition of
    $\overline{\Tone}$.}\label{fig:transformation}
\end{figure}
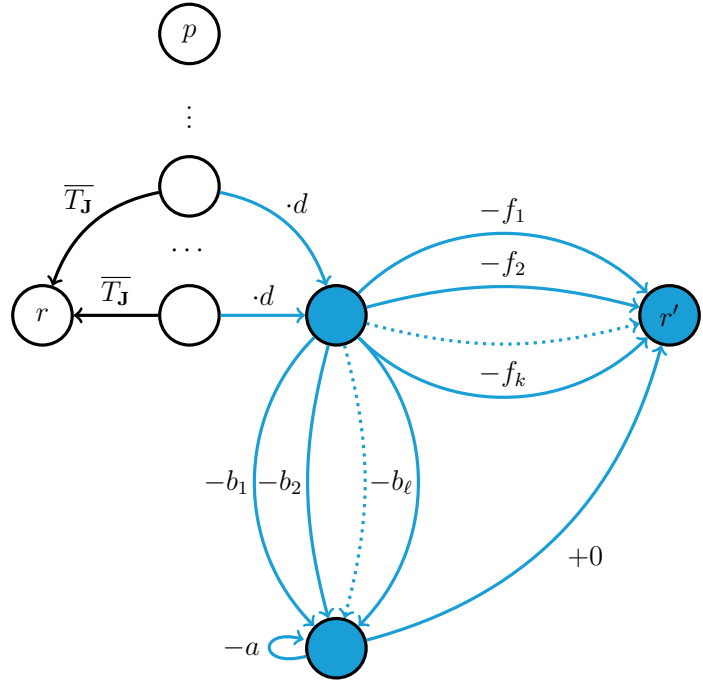

By Lemma~\ref{lemma:V_to_W}, we have $p(\vec{u}) \step{T^* \cdot
  \Tone} r(n, n, \ldots, n)$ if and only if $p(\delta(\vec{u}))
\step{\overline{T}^* \cdot \overline{\Tone}} r(d \cdot n)$. Indeed,
the direction ($\Rightarrow$) is immediate. To prove the implication ($\Leftarrow$) suppose $p(\delta(\vec{u}))
\step{\overline{T}^* \cdot \overline{\Tone}} r(d \cdot n)$.
By Lemma~\ref{lemma:V_to_W} we have $p(\vec{u}) \step{T^* \cdot
  \Tone} r(\vec{v})$ such that $\delta(\vec{v}) = d \cdot n$. By definition the last transition is $r'(\vec{w}) \step{t} r(\matone \cdot \vec{w})$, where $\vec{v} = \matone \cdot \vec{w}$.
By definition of $\matone$: $\vec{v} = (\delta(\vec{w}), \ldots, \delta(\vec{w}))$. Since $\delta(\vec{v}) = d \cdot n$ we get $\vec{v} = (n, n, \ldots, n)$.

Thus, it suffices to test whether $p(\delta(\vec{u}))
\step{\overline{T}^* \cdot \overline{\Tone}} r(m)$ in $\W$ for some $m
= d \cdot n \in S$. This can be achieved by extending $\W$ with a
gadget that non deterministically subtracts some element of $S$ after
executing a transition from $\overline{\Tone}$. More precisely, since
$S$ is an (effectively) semilinear set of integers, it is also
(effectively) ultimately periodic. Thus, it is possible to obtain a
description of $S = F \cup B + a \cdot \N$ where $F = \{f_1, f_2,
\ldots, f_k\}$ and $B = \{b_1, b_2, \ldots, b_\ell\}$ are finite
subsets of $\Z$. We extend $\W$ with the gadget depicted in
Figure~\ref{fig:transformation}. More precisely, for every transition
$t \in \overline{\Tone}$ leading to $r$, we add a new transition
leading to a gadget that either subtracts some number from $F$ or
some number from $B + a \cdot \N$. Note that the gadget is not
``attached'' directly to $r$ as we must ensure that $r$ is entered by
a transition of $\overline{\Tone}$. Hence, testing
whether $$p(\delta(\vec{u})) \step{\overline{T}^* \cdot
  \overline{\Tone}} r(m) \text{ in } \W \text{ for some } m \in S$$
amounts to testing whether $p(\delta(\vec{u})) \step{*} r'(0)$ in the
new net. Since the latter can be done in polynomial
space~\cite{FinkelGH13}, we are done.
\end{proof}

\subsection{Undecidability for monogenic classes}

In contrast with the previous result, we prove that decidability is
undecidable in general for monogenic classes:

\begin{thm}\label{thm:monogenic}
  Reachability for monogenic affine $\Z$-VASS is
  undecidable. Moreover, there exists a fixed monogenic affine
  $\Z$-VASS for which deciding reachability is undecidable.
\end{thm}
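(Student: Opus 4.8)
The plan is to reduce an undecidable problem to $\reach[\class]$ for a single fixed monogenic class $\class = \langle \mat{M} \rangle$, exploiting the fact that one generator can already supply unbounded \emph{multiplicative} power. Concretely I would take the unipotent shear $\mat{M} = \bigl(\begin{smallmatrix}1 & 1 \\ 0 & 1\end{smallmatrix}\bigr) \in \N^{2\times 2}$, whose monoid $\langle \mat{M}\rangle = \{\mat{M}^n : n \ge 0\}$ is infinite, nonnegative, and generated by a single matrix, hence monogenic. The key observation is that $\mat{M}^n$ realizes the update $(x_1, x_2) \mapsto (x_1 + n\,x_2,\, x_2)$: it adds $n$ copies of the second coordinate to the first while leaving the second fixed. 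Exactly as in Section~\ref{sec:inf:monoids}, a transition labelled $\mat{M}^n$ can be split into $n$ transitions labelled $\mat{M}$, so a nondeterministically iterated $\mat{M}$-loop computes $x_1 \leftarrow x_1 + n\,x_2$ for a guessed $n \ge 0$, and by interleaving an additive transition the loop can simultaneously track $n$ in an auxiliary counter. This is a genuine multiplication gadget for a stored multiplier (the persistent second coordinate) and a free iteration count.

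From this gadget I would reduce from the solvability of Diophantine equations over $\N$ (Hilbert's tenth problem). A system in Davis normal form consists of linear equations together with product constraints $x_k = x_i x_j$. The $\Z$-VASS would first guess each variable by additive self-loops into a dedicated counter, realize each product by preloading $x_j$ into the multiplier slot and running the $\mat{M}$-loop a number of times coupled to $x_i$, and finally require, through the reachability \emph{target} configuration, that every constraint counter reach its prescribed value. The point is that exact reachability fixes the entire target vector at once, so one run is forced to satisfy all equations simultaneously; no local zero-tests (unavailable over $\Z$) are needed. Undecidability of Diophantine solvability then gives undecidability of $\reach[\langle\mat{M}\rangle]$, which is already a strong form of the first assertion because the class is fixed. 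Alternatively, one can reuse the PCP reduction of Proposition~\ref{proposition:undecidable} and \cite{Rei15}, whose only matrices are the doublings $\mat{D_1}, \mat{D_2}$, once single-counter doubling is emulated by the shear.

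The hard part will be \emph{copying}, that is, the fan-out of a value used as a multiplier or loop length more than once. A single generator cannot provide both an unbounded multiplication and a $0/1$ copy matrix among its powers, since passing through a bounded (idempotent) copy matrix would cap all higher powers; so no power of $\mat{M}$ duplicates a counter. I would therefore engineer the encoding so that each variable is consumed the correct number of times \emph{without} an explicit copy: letting each product-loop drain the stored register it represents, and using the exact target to force these registers back to prescribed values, while choosing a Diophantine (or register-machine) normal form that threads values through single-use chains rather than reusing them freely. Getting this synchronization exactly right over $\Z$, where transitions are nonblocking and counters may go negative, is the principal technical difficulty, and it is also where a direct Minsky-machine simulation and a Diophantine reduction trade off against each other.

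For the second assertion — a \emph{fixed} monogenic affine $\Z$-VASS with undecidable reachability relation — I would apply the same construction to a universal instance. Taking a universal Diophantine polynomial $U(a; x_1,\dots,x_n)$ (equivalently, simulating a fixed universal register machine), the parameter $a$ is placed in the initial vector while the control-states, counters, and the single generator $\mat{M}$ remain fixed. Then $p(\vec{u}_a) \step{*} q(\vec{0})$ holds if and only if $U(a;\cdot)=0$ has a solution in $\N$, so the reachability relation of this one machine is undecidable. This produces the required fixed monogenic affine $\Z$-VASS and completes the proof of Theorem~\ref{thm:monogenic}.
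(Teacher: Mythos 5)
Your overall plan---reduce from Hilbert's tenth problem, use a single shear-type matrix whose iteration realizes multiplication, and obtain the fixed machine from a universal Diophantine polynomial---is the same strategy as the paper's, and your last step matches the paper essentially verbatim. But the concrete gadget has a genuine gap that you name yet do not close, and it is the crux of the entire proof. With the single matrix $\mat{M}=\bigl(\begin{smallmatrix}1 & 1\\ 0 & 1\end{smallmatrix}\bigr)$ embedded on one \emph{fixed} pair of coordinates, every product constraint must be funneled through that same pair, so the scratch coordinates are necessarily reused. Over $\Z$ the transitions are nonblocking and the only ``zero-test'' available is the final reachability target; a counter's final value certifies nothing about its intermediate values once that counter is touched again. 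Hence the exactness of each load and drain of the shear coordinates (and of each iteration count $n$) cannot be enforced: a run may under-drain the accumulator after one product and over-drain it after another, shifting surplus between result counters in ways the single final target need not exclude. Your proposed remedy, ``single-use chains,'' is the right idea in spirit but is incompatible with a shear pinned to two fixed coordinates, which are precisely the one resource that cannot be made single-use. Your fallback via Reichert's PCP reduction is worse off: a PCP run performs unboundedly many doublings, so no fixed dimension can supply fresh certified scratch for each one, and doubling itself requires copying a coordinate into the multiplier slot, which (as you correctly observe) no power of a single generator provides.

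The paper closes this gap by turning your shear idea into a \emph{simultaneous multi-block} shear: each squaring-macro instance gets fresh counters $\vr{y},\vr{z}$, and the one generator $\mat{A}$ performs $\vr{z} \leftarrow \vr{z} + 2\vr{y}$ on \emph{all} instances at once, the $+1$ needed for $(n+1)^2 = n^2 + 2n + 1$ residing in the additive vector of the particular transition. Applying $\mat{A}$ is harmless to every macro that is finished or not yet started, because those instances maintain $\vr{y}=0$; consequently each scratch counter is zero-tested once, at the end, and is never meaningfully modified afterwards---this is exactly the certification your reused pair lacks. Two further points are worth noting. First, only a constant number of multiplications is needed because the polynomial is fixed, which is why fresh counters per macro suffice and why the Diophantine route is preferable to a Minsky- or PCP-style simulation. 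Second, the fan-out you single out as the hard part is actually free: the required copies $\vr{x}_i^1,\ldots,\vr{x}_i^{n_i}$ of each guessed variable are created at initialization by one loop incrementing them in lockstep, after which multiplication is recovered additively from squaring via $2mn = (m+n)^2 - m^2 - n^2$.
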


We show the first part of Theorem~\ref{thm:monogenic} by giving a
reduction from the problem of determining whether a given Diophantine
equation has a solution over the natural numbers, which is well-known
to be undecidable. The second part of Theorem~\ref{thm:monogenic}
follows as a corollary. Indeed, by Matiyasevich's theorem, Diophantine
sets correspond to recursively enumerable sets. In particular, there
exists a polynomial $P$ such that
$$x \in \N \text{ is the encoding of a halting Turing machine} \iff
\exists \vec{y} : P(x, \vec{y}) = 0.$$ The forthcoming construction
will yield a monogenic affine $\Z$-VASS that can test ``$\exists
\vec{y} : P(x, \vec{y}) = 0$'' by nondeterministically guessing
$\vec{y}$ and testing $P(x, \vec{y}) = 0$. Hence, reachability cannot
be decided for this monogenic affine $\Z$-VASS as the above language
is undecidable.

Let us show the first part of Theorem~\ref{thm:monogenic}. Let $x_1,
x_2, \ldots, x_k$ be variables of a given polynomial $P(x_1, x_2,
\ldots, x_k)$. We will construct an instance of the reachability
problem, for a monogenic affine $\Z$-VASS $\V$, such that reachability
holds if and only if $P(x_1, x_2, \ldots, x_k) = 0$ has a solution
over $\N^k$.

The affine $\Z$-VASS will be described using the syntax of counter
programs; see~\cite{Esparza98,CzerwinskiLLLM19}, where a similar
syntax was used to present the VASS model. We will make use of two
instructions: \testz{\vr{x}}\ and \textbf{loop}. The former checks
whether counter $\vr{x}$ has value $0$, and the latter repeats a block
of instructions an arbitrary number of
times. Figure~\ref{fig:programs} gives an example of such a program
together with its translation as an affine $\Z$-VASS.

\begin{figure}[h!]
  \centering
  \begin{subfigure}{.3\textwidth}
    \centering
    \begin{algorithmic}
      \Loop
      \State \add{\vr{x}}{1}
      \State \add{\vr{y}}{4}
      \EndLoop
      \Loop
      \State \sub{\vr{y}}{3\vr{x} + 1}
      \EndLoop
      \State \testz{\vr{y}}
    \end{algorithmic}
  \end{subfigure}
  \begin{subfigure}{.5\textwidth}
    \centering
    \begin{tikzpicture}[->, node distance=3cm, auto, very thick, scale=0.9, transform shape, font=\large]
      \tikzset{every state/.style={inner sep=1pt, minimum size=25pt}}
      
      \node[state] (p) {};
      \node[state, right = 3cm of p] (q) {};

      \path
      (p) edge[loop above] node {
          $\begin{pmatrix}1 & 0 \\ 0 & 1\end{pmatrix}$,
          $\begin{pmatrix}1 \\ 4\end{pmatrix}$}
      (p)
      
      (q) edge[loop above] node {
        $\begin{pmatrix}1 & 0 \\ -3 & 1\end{pmatrix}$,
          $\begin{pmatrix}0 \\ 1\end{pmatrix}$}
      (q)

      (p) edge[below] node {
        $\begin{pmatrix}1 & 0 \\ 0 & 1\end{pmatrix}$,
          $\begin{pmatrix}0 \\ 0\end{pmatrix}$
      }
      (q)
      ;
    \end{tikzpicture}
  \end{subfigure}
  \caption{\emph{Left}: an example program $\mathcal{P}$ using
    instructions \textbf{zero?}\ and \textbf{loop}. \emph{Right}: an
    affine $\Z$-VASS $\V$ equivalent to $\mathcal{P}$, where its first
    and second components correspond to counters $\vr{x}$ and $\vr{y}$
    respectively. The program loops are simulated by loops within the
    control structure of $\V$. Note that whenever $\mathcal{P}$ only
    adds and subtracts constants from counters, the associated matrix
    is the identity. Since the only way $\V$ can test whether a
    counter equals $0$ is at the end of the program via a reachability
    query, instruction \testz{\vr{y}}\ merely emphasizes that counter
    $\vr{y}$ will never be used again.}\label{fig:programs}
\end{figure}
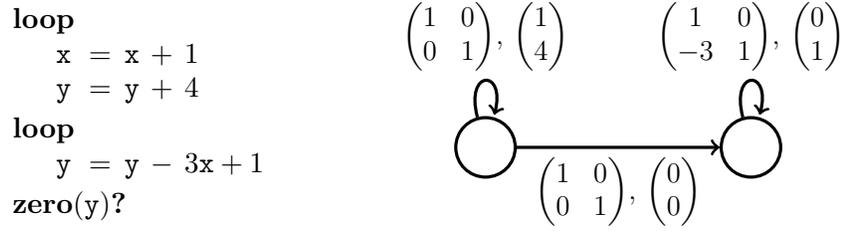

\parag{Macros.} Before describing the reduction, let us introduce
helpful macros. First, we define macros ``\transfer{\vr{x}}{\vr{y}}''
and ``\remove{\vr{x}}{\vr{y}}''. The former computes $\vr{y} = \vr{y}
+ \vr{x}$ and $\vr{x} = 0$, and the latter computes $\vr{y} = \vr{y} -
\vr{x}$ and $\vr{x} = 0$. Both macros work under the assumption that
$\vr{x}$ is initially non negative. These macros are implemented as
follows:
\begin{center}
  \alggap
  \begin{minipage}{0.49\textwidth}
    \begin{algorithmic}
      \Function{\transfer{\vr{x}}{\vr{y}}:}{}
      \COMMENT{pre-cond.: $\vr{x} \geq 0$}
      \Loop
      \State \sub{\vr{x}}{1}
      \State \add{\vr{y}}{1}
      \EndLoop
      \State \testz{\vr{x}}
      \EndFunction
    \end{algorithmic}
  \end{minipage}
  \begin{minipage}{0.49\textwidth}
    \begin{algorithmic}
      \Function{\remove{\vr{x}}{\vr{y}}:}{}
      \COMMENT{pre-cond.: $\vr{x} \geq 0$}
      \Loop
      \State \sub{\vr{x}}{1}
      \State \sub{\vr{y}}{1}
      \EndLoop
      \State \testz{\vr{x}}
      \EndFunction
    \end{algorithmic}
  \end{minipage}
  \alggap
\end{center}

We define another macro ``\square{\vr{s}}{\vr{t}}'' for squaring the
contents of a counter. More precisely, it computes $\vr{s} = \vr{t}^2$
and $\vr{t} = 0$. This macro is implemented as follows:
\begin{center}
  \alggap
  \begin{minipage}{0.98\textwidth}
    \begin{algorithmic}[1]
      \Function{\square{\vr{s}}{\vr{t}}:}{}
      \COMMENT{pre-condition: $\vr{t} = \vr{x} = \vr{y} = \vr{z} = 0$}

      \State \transfer{\vr{s}}{\vr{x}} \label{l:tr}
      \Loop \label{l:s}
      \State \sub{\vr{x}}{1}
      \State \add{\vr{y}}{1}
      \State \add{\vr{z}}{2\vr{y} + 1} \label{l:e}
      \EndLoop
      \State \testz{\vr{x}}
      \State \transfer{\vr{z}}{\vr{t}}
      \COMMENT{$\vr{t} = \vr{s}^2$, $\vr{y} = \vr{s}$ and $\vr{x} =
        \vr{z} = 0$}
      
      \State \remove{\vr{s}}{\vr{y}} \label{l:rem}
      \State \testz{\vr{y}} \COMMENT{$\vr{y} = 0$}

      \EndFunction
    \end{algorithmic}
  \end{minipage}
  \alggap
\end{center}
The above program starts with $\vr{t}$ and its auxiliary counters set
to $0$, and ends with $\vr{s}$ and the auxiliary counters set to
$0$. Its correctness follows by observing that $(n+1)^2 = n^2 + (2n +
1)$.

We introduce one last macro ``\multip{\vr{s}}{\vr{s'}}{\vr{t}}'' for
multiplication. More precisely, it computes $\vr{t} = \vr{s} \cdot
\vr{s}'$ and $\vr{s} = \vr{s}' = 0$. Its implementation exploits the
fact that $2mn = (m + n)^2 - m^2 - n^2$:
\begin{center}
  \alggap
  \begin{minipage}{0.98\textwidth}
    \begin{algorithmic}
      \Function{\multip{\vr{s}}{\vr{s}'}{\vr{t}}:}{}
      \COMMENT{pre-condition: $\vr{t} = \vr{x} = \vr{y} = \vr{z} =
        \vr{z}' = 0$}

      \State \square{\vr{s}}{\vr{x}}
      \COMMENT{$\vr{x} = \vr{s}^2$}
      \State \square{\vr{s}'}{\vr{y}}
      \COMMENT{$\vr{y} = (\vr{s}')^2$}
      \State \transfer{\vr{s}}{\vr{z}'}
      \State \transfer{\vr{s}'}{\vr{z}'}
      \State \square{\vr{z}'}{\vr{z}}
      \COMMENT{$\vr{z} = (\vr{s} + \vr{s}')^2$, $\vr{z}' = 0$}
      \State \remove{\vr{x}}{\vr{z}}
      \State \remove{\vr{y}}{\vr{z}}
      \COMMENT{$\vr{z} = 2 \cdot \vr{s} \cdot \vr{s}'$, $\vr{x} = \vr{y} = 0$}
      \Loop
      \State \sub{\vr{z}}{2}
      \State \add{\vr{t}}{1}
      \EndLoop
      \State \testz{\vr{z}}
      \COMMENT{$\vr{t} = \vr{s} \cdot \vr{s}'$, $\vr{z} = 0$}

      \EndFunction
    \end{algorithmic}
  \end{minipage}
  \alggap
\end{center}
The above program starts with $\vr{t}$ and its auxiliary counters set
to $0$, and ends with $\vr{s}$, $\vr{s}'$ and its auxiliary counters
set to $0$. Note that a macro ``\multip{\vr{s}}{c}{\vr{t}}'' for
multiplying by a constant $c$ can be achieved by a simpler program:
\begin{center}
  \alggap
  \begin{minipage}{0.98\textwidth}
    \begin{algorithmic}
      \Function{\multip{\vr{s}}{c}{\vr{t}}:}{}
      \COMMENT{pre-condition: $\vr{t} =  0$}

      \Loop
      \State \sub{\vr{s}}{1}
      \State \add{\vr{t}}{c}
      \EndLoop
      \State \testz{\vr{s}}

      \EndFunction
    \end{algorithmic}
  \end{minipage}
  \alggap
\end{center}

Although these programs can be implemented rather straightforwardly by
an affine $\Z$-VASS, two remarks are in order:
\begin{itemize}
\item Affine $\Z$-VASS do not have any native operation for testing a
  counter for zero. However, a counter $\vr{x}$ can be tested
  \emph{once} via a reachability query, provided that $\vr{x}$ is left
  untouched after instruction \testz{x}\ has been
  invoked. Consequently, a \emph{constant} number of zero-tests can be
  performed on a counter, provided its initial contents has been
  duplicated;
  
\item Instruction ``\transfer{\vr{s}}{\vr{t}}'' at line~\ref{l:tr} of
  macro ``\square{\vr{s}}{\vr{t}}'' destroys the contents of $\vr{s}$
  which is later needed at line~\ref{l:rem}. As for zero-tests, this
  is not an issue provided that some counter holds a copy of
  $\vr{s}$. Thus, only a \emph{constant} number of squaring, and hence
  of multiplications, can be performed from a given counter.
\end{itemize}

\parag{The construction.} Let us now describe the reachability
instance. The initial vector is $\vec{0}$, which corresponds to having
all counters set to $0$ at the start of the program. The target vector
is also $\vec{0}$, which corresponds to performing zero tests on all
counters.

The program starts by performing a sequence of loops that guess a
valuation $\vec{x}$ for which $P(\vec{x}) = 0$ is to be tested. More
precisely, a value is nondeterministically picked for each variable
$x_i$ and stored in counters $\vr{x}_i^1, \vr{x}_i^2, \ldots,
\vr{x}_i^{n_i}$. The reason for having $n_i$ copies of the value is to
address the two issues mentioned earlier concerning zero-tests and
reusing counters within macros. The precise number of copies, $n_i$,
will be determined later. The fragment of code achieving the
initialization is as follows:
\begin{center}
  \alggap
  \begin{minipage}{0.98\textwidth}
    \begin{algorithmic}
      \Loop
      \State \add{\vr{x}_1^1}{1}; \quad \add{\vr{x}_1^2}{1}; \quad
      $\cdots$ \quad \add{\vr{x}_1^{n_1}}{1}
      \EndLoop
      
      \Loop
      \State \add{\vr{x}_2^1}{1}; \quad \add{\vr{x}_2^2}{1}; \quad
      $\cdots$ \quad \add{\vr{x}_2^{n_2}}{1}
      \EndLoop
      
      \State \quad\vdots
      \vspace{1pt}
      
      \Loop
      \State \add{\vr{x}_k^1}{1}; \quad \add{\vr{x}_k^2}{1}; \quad
      $\cdots$ \quad \add{\vr{x}_k^{n_k}}{1}
      \EndLoop
    \end{algorithmic}
  \end{minipage}
  \alggap
\end{center}

After the initialization, we compute the value of each monomial
occurring within polynomial $P(x_1, x_2, \ldots, x_k)$. This can be
achieved using counters $\vr{x}_i^j$ and the multiplication macro. Let
$Q(x_1, x_2, \ldots, x_k)$ be a monomial of degree $d$. We show how to
proceed by induction on $d$. If $d = 0$, then this is trivial. For
larger degrees, we evaluate $Q$ without its coefficient $c$, and then
apply macro ``\multip{\vr{s}}{c}{\vr{t}}''. If $d = 1$, then we can
simply transfer the appropriate counter $\vr{x}_i^j$. Otherwise, $Q$
is a product of two monomials $Q'$ and $Q''$ of smaller degrees. By
induction hypothesis, we can construct three copies of both monomials
$Q'$ and $Q''$. Then, using the multiplication macro, we obtain
$Q$. Having evaluated all monomials, we can transfer each of their
values to a common counter using the \textbf{transfer} and
\textbf{remove} macros depending on whether their sign is positive or
negative. Finally, we test if the resulting value equals zero, which
corresponds to having a solution to $P(x_1, x_2, \ldots, x_k) = 0$.

Before arguing correctness of the construction, let us see why the
whole program can be translated as a monogenic affine $\Z$-VASS $\V$,
\ie\ using only the identity matrix and one extra matrix
$\mat{A}$. First, note that all macros have internal counters
$\vr{x}$, $\vr{y}$ and $\vr{z}$. Every time we use a macro, we use
three fresh counters, increasing the dimension of $\V$. Second, note
that the only macro that requires a matrix different from the identity
is ``\square{\vr{s}}{\vr{t}}'', which doubles $\vr{y}$ during an
assignment to $\vr{z}$ at line~\ref{l:e}. We will construct the matrix
$\mat{A}$ as follows. Suppose we want to encode one of the squaring
macros. Matrix $\mat{A}$ will have the same updates for all counters
denoted as $\vr{z}$ in all macros, but the vector will use constants
according to this macro. That is, all coordinates for counters not
occurring in this macro will be $0$. In particular, counter $\vr{z}$
from this macro will be updated like in line~\ref{l:e},
\ie\ ``\add{\vr{z}}{2\vr{y} + 1}'', and all other counters
corresponding to some other $\vr{z}$ will be updated as
``\add{\vr{z}}{2\vr{y}}''.

\parag{Correctness.} We conclude by proving correctness of the
construction. If $P$ has a solution, then it is straightforward to
extract a run from $\V$: (a)~each $\vr{x}_i^j$ is initialized
according to the solution; and (b)~each loop of the program is
performed the exact number times so that each zero-test holds. It
remains to observe that after performing a zero-test on some counter,
$\V$ does not perform any operation on this counter or performs
``\add{\vr{z}}{2\vr{y}}''. But if the values of $\vr{y}$ and $\vr{z}$
are equal to zero, then $\vr{z}$ will remain equal to zero after such
an update.

Conversely, suppose there is a reachability witness (from $\vec{0}$ to
$\vec{0}$). We claim that the initialization of counters $\vr{x}_i^j$
provides a solution to $P$. To prove this, it suffices to show that
every zero-test was valid. This is clear for all counters except for
the $\vr{z}$ within the squaring macro. Indeed, all other counters
never change their values afterwards. However, counter $\vr{z}$ is
updated by ``\add{\vr{z}}{2\vr{y}}''. If $\vr{y}$ is non zero, then
this will be detected by the zero-test on~$\vr{y}$. Otherwise, the
update ``\add{\vr{z}}{2\vr{y}}'' never changes the value of $\vr{z}$
as required.

\section{Conclusion} \label{sec:conclusion}

We have shown that the reachability problem for afmp-$\Z$-VASS reduces
to the reachability problem for $\Z$-VASS, \ie\ every afmp-$\Z$-VASS
$\V$ can be simulated by a $\Z$-VASS of size polynomial in $|\V|$,
$|\monoid[\V]|$ and $\norm{\monoid[\V]}$. In particular, this allowed
us to establish that the reachability relation of any afmp-$\Z$-VASS
is semilinear.

For all nonnegative classes and consequently for all of the variants
we studied --- reset, permutation, transfer, copy and copyless
$\Z$-VASS --- $|\monoid[\V]|$ and $\norm{\monoid[\V]}$ are
of exponential size, thus yielding a PSPACE upper bound on their
reachability problems. Moreover, we have established PSPACE-hardness
for all of these specific classes, except for the reset case which is
NP-complete.

We do not know whether an exponential bound on $\norm{\monoid_\V}$
holds for any class of afmp-$\Z$-VASS over $\Z^{d \times d}$. We are
aware that an exponential upper bound holds when $\monoid_\V$ is
generated by a single matrix~\cite{IS16}; and when $\monoid_\V$ is a
group then we have an exponential bound but only on $|\monoid_\V|$
(see~\cite{KP02} for an exposition on the group case).

Finally, we have shown that there exists a (monogenic) class without
the finite-monoid property for which reachability is decidable. This
result was complemented by showing that reachability is undecidable in
general for monogenic classes.

\section*{Acknowledgments}

We are thankful to James Worrell for insightful discussions on
transfer VASS.

\bibliographystyle{alpha}
\bibliography{bibliography}

\end{document}